\documentclass[12pt]{article}
\usepackage[T1]{fontenc} 
\usepackage[utf8]{inputenc}
\usepackage[english]{babel}
\usepackage{enumerate}
\usepackage{verbatim}
\usepackage{graphicx,psfrag,epsf}
\usepackage{amsmath}
\usepackage{amssymb}
\usepackage{amsthm}
\usepackage{mathtools}
\usepackage{color}
\usepackage{multirow}
\usepackage{natbib}
\usepackage{url}
\usepackage{chngcntr}
\usepackage{sectsty}
\sectionfont{\large\centering\uppercase}
\subsectionfont{\noindent\textsc}
\usepackage{hyperref}
\hypersetup{
     colorlinks = true,
     citecolor = blue,
     linkcolor = blue,
     urlcolor = blue
}

\bibliographystyle{Chicago}

\newcommand{\blind}{0}

\addtolength{\oddsidemargin}{-.5in}%
\addtolength{\evensidemargin}{-.5in}%
\addtolength{\textwidth}{1in}%
\addtolength{\textheight}{1.3in}%
\addtolength{\topmargin}{-.8in}%

\newtheorem{theorem}{Theorem}[section]
\newtheorem{lemma}[theorem]{Lemma}

\theoremstyle{definition}
\newtheorem{definition}{Definition}[section]

\theoremstyle{remark}

\numberwithin{equation}{section}

\def\veps{\varepsilon}

\date{}
\begin{document}

\def\spacingset#1{\renewcommand{\baselinestretch}%
{#1}\small\normalsize} \spacingset{1}


\if0\blind
{
  \title{Semismooth Newton Coordinate Descent Algorithm for Elastic-Net Penalized Huber Loss
Regression and Quantile Regression}
  \author{Congrui Yi and Jian Huang \\
  \hspace{.2cm}\\
  Department of Statistics and Actuarial Science \\ University of Iowa}
  \maketitle
} \fi

\if1\blind
{
  \bigskip
  \bigskip
  \bigskip
  \begin{center}
    {\LARGE\bf Semismooth Newton Coordinate Descent Algorithm for Elastic-Net Penalized Huber Loss Regression and Quantile Regression}
\end{center}
  \medskip
} \fi

\bigskip
\begin{abstract}
We propose an algorithm, semismooth Newton coordinate descent (SNCD), for the elastic-net penalized Huber loss regression and quantile regression in high dimensional settings. Unlike existing coordinate descent type algorithms, the SNCD updates each regression coefficient and its corresponding subgradient simultaneously in each iteration. It combines the strengths of the coordinate descent and the semismooth Newton algorithm, and effectively solves the computational challenges posed by dimensionality and nonsmoothness. We establish the convergence properties of the algorithm. In addition, we present an adaptive version of the  ``strong rule" for screening predictors to gain extra efficiency. 
Through numerical experiments, we demonstrate that the proposed algorithm is very efficient and scalable to ultra-high dimensions. We illustrate the application via a real data example.
\end{abstract}

\noindent%
{\textbf{Keywords:}} High-dimensional regression; Nonsmooth optimization; Elastic-net; Newton derivatives; Solution path; Subgradient updating.

\spacingset{1.45} 

\section{Introduction}




Consider the linear regression model
\[
y_i = \beta_0 + x_i^{\top}\beta + \veps_i
\]
where $x_i$ is a $p$-dimensional vector of covariates,
$(\beta_0, \beta)$ are regression coefficients, and $\veps_i$ is the random error. We are interested in the high dimensional case where $p \gg n$ and the model is sparse in the sense that only a small proportion of the coefficients are nonzero. In such a scenario, a key task is identifying and estimating the nonzero coefficients. A popular approach is the penalized regression

\begin{equation}\label{gen_loss}
\min_{\beta_{0},\beta}\frac{1}{n}\sum_{i}\ell(y_{i}-\beta_{0}-x_{i}^{\top}\beta)+ \lambda P(\beta),
\end{equation}
where $\ell$ is a generic loss function and $p$ is a penalty function with a tuning parameter $\lambda \ge 0$.
We consider the elastic-net penalty \citep{zou2005regularization}
\[
P(\beta) \equiv P_{\alpha}(\beta) = \alpha\|\beta\|_{1}+(1-\alpha)\frac{1}{2}\|\beta\|_{2}^{2}, 0 \le \alpha \le 1,
\]
which is a convex combination of the lasso \citep{tibshirani1996regression} ($\alpha = 1$) and the ridge penalty ($\alpha = 0$).

A common choice for $\ell$ is the squared loss $\ell(t) = t^2/2$, corresponding to the least squares regression in classical regression literature. Although the squared loss is analytically simple, it is not suitable for data in the presence of outliers or heterogeneity. Instead, we could consider two widely used robust alternatives, the Huber loss \citep{huber1973robust} and the quantile loss  \citep{koenker1978regression}.

The Huber loss is
\begin{equation}\label{huberfunc}
\ell(t) \equiv h_{\gamma}(t)=\begin{cases}
\frac{t^{2}}{2\gamma}, & \text{if } |t|\leq\gamma, \\
|t|-\frac{\gamma}{2}, & \text{if } |t|>\gamma,
\end{cases}
\end{equation}
where $\gamma >0$ is a given constant.
This function is quadratic for $|t|\leq \gamma$ and linear for $|t|>\gamma$. In addition, it is convex and first-order differentiable. These features allow it to combine analytical tractability of the squared loss for the least squares and outlier-robustness of the absolute loss for the LAD regression.

The quantile loss is
\begin{equation}\label{quantfunc}
\ell(t) \equiv \rho_{\tau}(t)=t(\tau-I(t<0)), t \in \mathbb{R},
\end{equation}
where $0 < \tau < 1$.
This is a generalization of the absolute loss with $\tau=1/2$.
Rather than the conditional mean of the response given the covariates, quantile regression models conditional quantiles. For heterogeneous data, the functional relationship between the response and the covariates may vary in different segments of its conditional distribution. By choosing different $\tau$, quantile regression provides a powerful technique for exploring data heterogeneity in addition to outlier-robustness.



The theoretical properties of these two regression models have been systematically studied, yet there are relatively few researches on the algorithmic aspect, especially the penalized versions for high-dimensional data.
\cite{holland1977robust} proposed an iteratively re-weighted least squares algorithm for the unpenalized Huber loss regression. However, this algorithm does not have a natural extension to the penalized version. For unpenalized quantile regression, \cite{portnoy1997gaussian} formulated its dual form  as a linear programming problem and proposed an interior point method to solve it. The lasso penalized version can be shown to have a similar dual form, except that it becomes $(n+p)$-dimensional with $p$ extra constraints due to the penalty. Thus it can be solved using the same algorithm and this extension was implemented in
the R package \texttt{quantreg} (\url{http://cloud.r-project.org/package=quantreg}). However, it is not clear if this
approach is scalable to high-dimensional problems. \cite{osborne2011homotopy} proposed a homotopy algorithm for computing solution paths of lasso penalized quantile regression, where the lasso penalty was formulated as a constraint $\sum_{j=1}^p |\beta_j| \leq \kappa$, which is not directly comparable with the unconstrained formulation considered here.


In recent years coordinate descent algorithms have proven to be very effective for pathwise optimization of penalized regression models, see for example,
\cite{friedman2007pathwise} for lasso and fused lasso penalized least squares,  \cite{friedman2010regularization} for elastic-net penalized GLM, and
\cite{breheny2011coordinate} for nonconvex penalized least squares and logistic regression.
The loss functions considered by these authors are either quadratic, or twice differentiable which can be approximated quadratically via Taylor expansion. Hence the coordinate descent iterations have close-form solutions. However, the Huber loss is only first-order differentiable and the quantile loss is nondifferentiable, hence the above approach does not work. \cite{wu2008coordinate} proposed a coordinate descent algorithm for lasso penalized LAD regression that amounts to computing a weighted median at each iteration, but did not provide any guarantee for convergence. Recently \cite{peng2015iterative} proposed a QICD algorithm for nonconvex penalized quantile regression that majorizes the penalty functions by weighted lasso and then solves the problem with coordinate descent. The authors proved convergence of QICD to a stationary point, for which the majorization step plays a critical role. But when the lasso penalty is used, which does not need to be majorized, the algorithm becomes the same as the one in \cite{wu2008coordinate}. In addition, it appears that neither algorithm can be easily generalized to the elastic-net penalty with $0< \alpha < 1$.


In this paper, we propose a novel semismooth Newton coordinate descent (SNCD) algorithm for computing solution paths of the elastic-net penalized Huber loss regression and quantile regression. This algorithm combines the coordinate descent algorithm with the semismooth Newton algorithm (SNA) for solving nonsmooth equations. It is highly efficient and scalable in high-dimensional settings. Unlike a typical coordinate descent method which only updates the primal variable $\beta$, the SNCD utilizes both the primal and the dual information (via subgradient) in its iterations. In addition, an adaptive version of the  strong rule \citep{tibshirani2012strong} for screening predictors is incorporated to gain extra efficiency. We also provide an implementation of SNCD through a publicly available R package \texttt{hqreg} (\url{https://cran.r-project.org/web/packages/hqreg/index.html}) which currently supports the Huber loss, the quantile loss and the squared loss.
This algorithm can be generalized to other problems with nonsmooth loss functions, like the linear support vector machine with the hinge loss.



The rest of this paper is organized as follows.
In section \ref{section_huber} we introduce SNCD for the penalized Huber loss regression and establish its convergence.
 In section \ref{section_quantile} we extend SNCD to penalized quantile regression. Section \ref{section_asr} describes the adaptive strong rule. In section \ref{section_numerical} we investigate the performance of \texttt{hqreg}, our implementation of SNCD, through simulation studies and real datasets.


\section{SNCD for Penalized Huber Loss Regression}
\label{section_huber}


\subsection{Background on Newton Derivatives and SNA}


Based on the concepts of generalized Jacobian \citep{clarke1983optimization} and semismoothness \citep{mifflin1977semismooth}, \cite{qi1993nonsmooth} established superlinear convergence of a Newton-type method for solving finite-dimensional nonsmooth equations, hence the name Semismooth Newton Algorithm (SNA).
The Newton differentiability was introduced later for more general problems including infinite-dimensional cases \citep{chen2000smoothing, ito2008lagrange}. It has a simpler
formulation and is actually a milder condition than semismoothness. Newton derivatives can be calculated
via basic algebra and chain rules as indicated in Lemmas \ref{chain}, \ref{basic} and \ref{piecewise} in Appendix A.

\begin{definition}\label{slant}
A function $F:\mathbb{R}^{m}\rightarrow\mathbb{R}^{l}$ is said to be \emph{Newton differentiable} at $z\in\mathbb{R}^{m}$ if there exists an open neighborhood $\mathcal{N}(z)$ and a mapping $H:\mathcal{N}(z)\rightarrow\mathbb{R}^{l\times m}$ such that $\{H(z+h):z+h\in \mathcal{N}(z), h\neq 0\}$ is uniformly bounded in spectral norm induced by the Euclidean norm and
\[
\|F(z+h)-F(z)-H(z+h)h\|_{2}=o(\|h\|_{2})\quad\text{as}\; h\rightarrow 0.
\]
Here $H$ is called a \emph{Newton derivative} for $F$ at $z$. The set of all Newton derivatives at $z$ is denoted as $\nabla_N F(z)$.
\end{definition}



A function  $F:\mathbb{R}^{m}\rightarrow\mathbb{R}^{l}$ is said to be \emph{locally Lipschitz continuous} at $z$ if there exists $L(z)>0$ such that for all sufficiently small $h$,
\[
\|F(z+h)-F(z)\|_{2}\leq L\|h\|_{2}.
\]
Then $F$
is Newton differentiable at $z$ if and only if $F$ is locally Lipschitz continuous at $z$ \citep{chen2000smoothing}.
This gives a simple characterization of the Newton differentiability.
The following result due to \cite{chen2000smoothing} establishes the superlinear convergence of SNA under the Newton differentiability.


\begin{theorem}\label{superlinear}
Suppose that $F:\mathbb{R}^{m}\rightarrow\mathbb{R}^{m}$ is Newton differentiable at a solution $z^*$ of $F(z)=\mathbf{0}$. Let $H$ be a Newton derivative for $F$ at $z^*$.
Suppose there exists a neighborhood $\mathcal{N}(z^*)$ and $M>0$ such that $H(z)$ is nonsingular and $\|H(z)^{-1}\|\leq M$ for all $z \in \mathcal{N}(z^*)$, then the Newton-type iteration
\[
z^{k+1}=z^{k}-H(z^{k})^{-1}F(z^{k}),\ k=0, 1, \ldots
\]
converges superlinearly to $z^*$ provided that $\|z^0 - z^*\|_2$ is sufficiently small,
where $z^0$ is the initial value.
\end{theorem}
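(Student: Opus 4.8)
The plan is to track the error $e_k := z^k - z^*$ through the iteration map and show that its norm contracts at an eventually superlinear rate. Since $F(z^*) = \mathbf{0}$, I would first rewrite a single step as
\[
z^{k+1} - z^* = (z^k - z^*) - H(z^k)^{-1} F(z^k) = H(z^k)^{-1}\bigl[H(z^k)(z^k - z^*) - \bigl(F(z^k) - F(z^*)\bigr)\bigr].
\]
The bracketed term is exactly the quantity that Newton differentiability controls: taking $h = z^k - z^*$ as the perturbation about the base point $z^*$ in Definition \ref{slant}, its Euclidean norm is $o(\|z^k - z^*\|_2)$. Combining this with the hypothesis $\|H(z^k)^{-1}\| \le M$ on $\mathcal{N}(z^*)$ yields $\|z^{k+1} - z^*\|_2 \le M \cdot o(\|z^k - z^*\|_2)$, which is the core of the superlinear estimate.

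The delicate point is that the $o(\cdot)$ bound is only informative as $h \to 0$, so I must first guarantee that the iterates remain in $\mathcal{N}(z^*)$ and converge before the estimate can deliver the rate. To bootstrap this, I would fix $\varepsilon \in (0, 1/M)$ and, via Definition \ref{slant}, choose $\delta > 0$ so that the ball $B(z^*, \delta) \subseteq \mathcal{N}(z^*)$ and
\[
\|F(z^* + h) - F(z^*) - H(z^* + h)h\|_2 \le \varepsilon \|h\|_2 \quad \text{whenever } \|h\|_2 \le \delta.
\]
Starting from $\|z^0 - z^*\|_2 \le \delta$, an induction using the displayed identity shows that if $z^k \in B(z^*, \delta)$ then $H(z^k)^{-1}$ exists with $\|H(z^k)^{-1}\| \le M$ and $\|z^{k+1} - z^*\|_2 \le M\varepsilon \|z^k - z^*\|_2 \le \|z^k - z^*\|_2 \le \delta$. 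Since $M\varepsilon < 1$, the iterates stay in the ball, the errors decay geometrically, and hence $z^k \to z^*$.

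Finally, to upgrade linear convergence to superlinear, I would assume $z^k \neq z^*$ for every $k$ (otherwise the iteration reaches $z^*$ exactly and the claim is trivial) and set
\[
r_k := \frac{\|H(z^k)(z^k - z^*) - \bigl(F(z^k) - F(z^*)\bigr)\|_2}{\|z^k - z^*\|_2}.
\]
The displayed identity then gives $\|z^{k+1} - z^*\|_2 / \|z^k - z^*\|_2 \le M r_k$, and because $z^k \to z^*$ forces the perturbation $h_k = z^k - z^* \to 0$, Newton differentiability makes $r_k \to 0$. Thus the ratio of successive errors tends to zero, which is precisely superlinear convergence. The main obstacle I anticipate is exactly this bootstrapping: the little-$o$ estimate cannot be invoked until convergence is known, yet convergence is what must be established, so the argument must first extract a fixed contraction factor $M\varepsilon < 1$ to secure ball-invariance and $z^k \to z^*$, and only then exploit $r_k \to 0$ to obtain the superlinear rate.
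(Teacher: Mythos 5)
Your proposal is correct, and there is nothing in the paper to check it against: Theorem \ref{superlinear} is stated without proof, being imported from \cite{chen2000smoothing}. Your argument is the standard local analysis for semismooth Newton methods and matches the approach of that cited source: the error identity $z^{k+1}-z^* = H(z^k)^{-1}\bigl[H(z^k)(z^k-z^*)-(F(z^k)-F(z^*))\bigr]$, the bootstrap with a fixed contraction factor $M\varepsilon<1$ to secure ball-invariance and convergence, and only then the passage to the superlinear rate via $r_k\to 0$; you also correctly handle the degenerate case $z^k=z^*$ and the requirement $B(z^*,\delta)\subseteq\mathcal{N}(z^*)$ so that nonsingularity of $H(z^k)$ keeps the iteration well defined.
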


\subsection{Algorithm}
\subsubsection{Description}
Consider the Huber loss $\ell = h_\gamma$, then \eqref{gen_loss} becomes
\begin{equation}\label{huber}
\min_{\beta_{0},\beta} f_H(\beta_0, \beta) = \frac{1}{n}\sum_{i}h_\gamma(y_{i}-\beta_{0}-x_{i}^{\top}\beta)+ \lambda P_{\alpha}(\beta).
\end{equation}

Fix $\lambda$ and $\alpha$, and denote the optimizer by $(\widehat{\beta}_0, \widehat{\beta})$. Since the objective function in \eqref{huber} is convex, $(\widehat{\beta}_0, \widehat{\beta})$
satisfies the necessary and sufficient Karush-Kuhn-Tucker (KKT) conditions.
Let $\partial|t|$ denote the set of subgradients of the absolute value
function $|\cdot|$ at $t$, then it can be shown that
\begin{equation}
s \in \partial |t| \text{ if and only if } t = S(t+s),
\end{equation}
 where $S$ is the soft-thresholding operator with threshold 1, i.e. $S(z)=\text{sgn}(z)(|z|-1)_{+}$.
As shown in Appendix A, combining this fact with some other convex analysis concepts \citep{rockafellar1970convex, combettes2005signal},
 the KKT conditions of \eqref{huber} can be written as
\begin{equation}
\label{KKT1_reformulated}
\begin{cases}
-\frac{1}{n}\sum_{i}h^\prime_{\gamma}(y_{i}-\widehat{\beta}_{0}-x_{i}^{\top}\widehat{\beta}) = 0, \\
-\frac{1}{n}\sum_{i}h^\prime_{\gamma}(y_{i}-\widehat{\beta}_{0}-x_{i}^{\top}\widehat{\beta})x_{ij}+\lambda\alpha \widehat{s}_{j}+\lambda(1-\alpha)\widehat{\beta}_{j} = 0, \\
\widehat{\beta}_{j}-S(\widehat{\beta}_{j}+\widehat{s}_{j})=0, \quad j=1,\ldots, p,
\end{cases}
\end{equation}
where $\widehat{s}_j \in \partial |\widehat{\beta}_j|$ and $h^\prime_{\gamma}(\cdot)$, the derivative of $h_\gamma(\cdot)$, is given by
\[
h^\prime_{\gamma}(t)=\begin{cases}
\frac{t}{\gamma}, & \text{if } |t|\leq\gamma,\\
\text{sgn}(t), & \text{if } |t|>\gamma.
\end{cases}
\]


In this way the optimization problem \eqref{huber} is transformed into a root finding problem for a system of nonsmooth equations \eqref{KKT1_reformulated}. A straightforward approach is applying 
SNA to the entire system of equations. As discussed later in section \ref{subsection_compare}, this approach contains many matrix operations that cause $O(np^2)$ computational cost per iteration, which severely limits its scalability.

For better efficiency and scalability, we propose a new algorithm,  Semismooth Newton Coordinate Descent (SNCD),  that combines SNA with cyclic coordinate descent in solving these equations. Similar to the Gauss-Seidel method for linear equations, SNCD solves the equations of \eqref{KKT1_reformulated} in a cyclic fashion to avoid cumbersome matrix operations. We cycle through $(\beta_0, \beta, s)$ in a pairwise fashion: at each step, a pair $(\beta_j, s_j)$ (and  $\beta_0$ by itself) is updated by solving the corresponding part of \eqref{KKT1_reformulated}, while the other variables are fixed at their current values $\tilde{\beta}_k, \tilde{s}_k, k \neq j$. Specifically, we solve the following equations at each step:

\begin{itemize}
\item For $(\beta_j, s_j)$:
\begin{equation}\label{KKT_bj}
\begin{cases}
-\frac{1}{n}\sum_{i}h^\prime_{\gamma}(\tilde{r}_i + x_{ij}\tilde{\beta}_j- x_{ij}\beta_j)x_{ij}+\lambda\alpha s_{j}+\lambda(1-\alpha)\beta_{j} = 0, \\
\beta_j - S(\beta_j+s_j) = 0,
\end{cases}
\end{equation}
\item For $\beta_0$:
\begin{equation}\label{KKT_b0}
-\frac{1}{n}\sum_{i}h^\prime_{\gamma}(\tilde{r}_i + \tilde{\beta}_0- \beta_0) = 0,
\end{equation}
\end{itemize}
where $\tilde{r}_i = y_i - \tilde{\beta}_0-x_j^\top\tilde{\beta}, \quad i = 1,\ldots, n$.

Note that \eqref{KKT_bj} is exactly the KKT conditions of
\[
\min_{\beta_j}f_H(\ldots, \tilde{\beta}_{j-1}, \beta_j, \tilde{\beta}_{j+1}, \ldots),
\]
and \eqref{KKT_b0} the KKT condition of
\[
\min_{\beta_0} f_H(\beta_0, \tilde{\beta}_{1}, \ldots).
\]
Hence SNCD can be seen as a special type of coordinate descent.

Denote
\begin{equation}\label{psi}
\psi_{\gamma}(t)=\frac{1}{\gamma}I(|t|\leq\gamma),
\end{equation}
then $\psi_\gamma \in \nabla_N h_\gamma^\prime(t), \forall t\in \mathbb{R}$.
The SNCD iterations proceed as follows:

\begin{enumerate}[(i)]
\item
Updating $\beta_0$. Let
\[
F_0(z;\tilde{\beta}) = -\frac{1}{n}\sum_{i}h^\prime_{\gamma}(\tilde{r}_i + \tilde{\beta}_0- z).
\]
Since
\[
H_0(z) = \frac{1}{n}\sum_{i}\psi_\gamma(\tilde{r}_{i}+\tilde{\beta}_{0}-z)\in\nabla_N(F_0(z)),
\]
we update $\beta_0$ by solving \eqref{KKT_b0} via SNA
\[
\beta_{0}\leftarrow \tilde{\beta}_0 - H_0(\tilde{\beta}_0)^{-1}F_0(\tilde{\beta}_0) =
\tilde{\beta}_{0}+
\frac{\sum_{i}h^\prime(\tilde{r}_{i})}{\sum_{i}\psi_\gamma(\tilde{r}_{i})}.
\]

\item
Updating $(\beta_j, s_j)$. Let
\[
F_j(z;\tilde{\beta})=\left[\begin{array}{c}
-\frac{1}{n}\sum_{i} h_\gamma^\prime(\tilde{r}_{i}+x_{ij}\tilde{\beta}_{j}-x_{ij}z_1)x_{ij}+\lambda\alpha z_2+\lambda(1-\alpha)z_1\\
z_1-S(z_1+z_2)
\end{array}\right],
\]
where $z=(z_1, z_2)^{\top}$.
Since
\begin{equation}\label{soft-thresholding decompose}
z_1 - S(z_1 + z_2) =
\begin{cases}
-z_2 + \text{sgn}(z_1+z_2) & \text{if } |z_1 + z_2|>1,\\
z_1 & \text{if } |z_1+z_2|\leq 1,
\end{cases}
\end{equation}
we solve for $(\beta_j, s_j)$ from \eqref{KKT_bj} via SNA in two types of updates:
\begin{enumerate}
\item
$|\tilde{\beta}_{j}+\tilde{s}_{j}|>1$. For $z$ with $|z_1 + z_2|>1$,  a Newton derivative of $F_j$ at $z$ is
\begin{equation}
\label{ND1}
H_{j}(z)=\left[\begin{array}{cc}
\frac{1}{n}
\sum_i\psi_\gamma(\tilde{r}_{i}+
x_{ij}\tilde{\beta}_{j}-x_{ij}z_{1})x_{ij}^{2}+\lambda(1-\alpha) & \lambda\alpha\\
0 & -1
\end{array}\right]\in \nabla_N F_{j}(z).
\end{equation}
Hence the update is
\[
\left[\begin{array}{c} \beta_{j} \\ s_{j} \end{array}\right]
\leftarrow
\left[\begin{array}{c} \tilde{\beta}_{j} \\ \tilde{s}_{j} \end{array}\right]
- H_j(\tilde{\beta}_j, \tilde{s}_j)^{-1} F_j(\tilde{\beta}_j, \tilde{s}_j) =
\left[\begin{array}{c}
\tilde{\beta}_{j}+
\frac{\frac{1}{n}\sum_{i}h_\gamma^\prime(\tilde{r}_{i})x_{ij}-\lambda\alpha \text{sgn}(\tilde{\beta}_{j}+\tilde{s}_{j})-\lambda(1-\alpha)\tilde{\beta}_{j}}
{\frac{1}{n}\sum_{i}\psi_\gamma(\tilde{r}_i)x_{ij}^{2}+\lambda(1-\alpha)} \\
\text{sgn}(\tilde{\beta}_{j}+\tilde{s}_{j})
\end{array}
\right].
\]
\item
$|\tilde{\beta}_{j}+\tilde{s}_{j}|\leq 1$.
For $z$ with $|z_1 + z_2|\leq1$, a Newton derivative of $F_j$ at $z$ is
\begin{equation}
\label{ND2}
H_{j}(z)=\left[\begin{array}{cc}
\frac{1}{n}
\sum_i\psi_\gamma(\tilde{r}_{i}+x_{ij}\tilde{\beta}_{j}-x_{ij}z_{1})x_{ij}^{2}+
\lambda(1-\alpha) & \lambda\alpha, \\
1 & 0
\end{array}\right]\in \nabla_N F_{j}(z).
\end{equation}
Hence the update is
\[
\left[\begin{array}{c} \beta_{j} \\ s_{j} \end{array}\right]
\leftarrow
\left[\begin{array}{c} \tilde{\beta}_{j} \\ \tilde{s}_{j} \end{array}\right]
- H_j(\tilde{\beta}_j, \tilde{s}_j)^{-1} F_j(\tilde{\beta}_j, \tilde{s}_j) =
\left[\begin{array}{c}
0 \\
\frac{
\frac{1}{n}\sum_ih^\prime_\gamma(\tilde{r}_{i})x_{ij}
+\tilde{\beta}_{j}\cdot\frac{1}{n}
\sum_{i}\psi_\gamma(\tilde{r}_{i})x_{ij}^{2}}{\lambda\alpha}
\end{array}
\right].
\]
\end{enumerate}
\end{enumerate}

\subsubsection{Convergence}
Since SNCD fits in the general coordinate descent framework, its convergence property follows
from the convergence results for coordinate descent
\citep{tseng2001convergence}.
To apply the results, we first show that the optimization problem is of the form
\[
\min\; f(z_{1},\ldots,z_{m})=f_{0}(z_{1},\ldots,z_{m})+\sum_{j=1}^m f_{j}(z_{j}),
\]
where $f_0, f_1, \ldots, f_m$ are convex, $f_0$ is first-order differentiable and the level set $\{z: f(z)\leq f(z^0)\}$ is bounded given any initial point $z^0$. A key fact to notice about this formulation is that the nondifferentiable part $\sum_j f_j(z_j)$ must be separable. The penalized Huber loss regression model in \eqref{huber} clearly satisfies these conditions.

At each coordinate update, SNA is applied to solve the equations, which requires nonsingularity of the Newton derivative and the uniform boundedness of its inverse. When updating $\beta_0$, these requirements are met if $|\sum_{i}\psi_\gamma(y_{i}-\beta_{0}-x_{i}^{\top}\beta)|$ is bounded away from 0. This is true as long as there is at least one observation with $|y_i-\beta_0 - x_i^\top \beta|\leq \gamma$. When updating $\beta_j, s_j$, it can be shown via some algebra that a sufficient condition is $0<\alpha<1$ and $\psi_\gamma$ is bounded. The latter always holds since $\psi_\gamma(t) \in \{0,1/\gamma\}$ for any $t$.

In order for this local SNA strategy to work well, we also need the starting point and the optimal point in each coordinate update to be sufficiently close. Denote the globally initial $f_H$ value by $f_H^0$. Since $f_H$ decreases along SNCD iterations and the level set $\{(\beta_0, \beta): f_H(\beta_0, \beta) \leq f_H^0 \}$ is bounded, the closeness requirement is satisfied if the diameter of the set is sufficiently small.

The above discussions are summarized in the following result.

\begin{theorem}\label{SNCD_huber}
For problem \eqref{huber}, let $\lambda > 0$, $\alpha \in (0,1)$ and the initial $f_H$ value be $f_H^0$. Assume for every point $(\beta_0, \beta)$ in the level set $\mathcal{L} = \{(\beta_0, \beta): f_H(\beta_0, \beta) \leq f_H^0 \}$ there exists $i\in\{1,\ldots,n\}$ such that $|y_{i}-\beta_{0}-x_{i}^{\top}\beta|\leq\gamma$. Then SNCD iterations converge to a global minimizer provided that the diameter of $\mathcal{L}$ is sufficiently small.
\end{theorem}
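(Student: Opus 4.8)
The plan is to realize SNCD as an instance of the block coordinate descent method of \citet{tseng2001convergence} and then to verify, coordinate by coordinate, that the local SNA steps actually deliver the exact coordinate minimizations that Tseng's theorem requires. First I would exhibit the decomposition
\[
f_H(\beta_0,\beta)=f_0(\beta_0,\beta)+\sum_{j=1}^{p}f_j(\beta_j),
\]
collecting into $f_0$ the data-fidelity term $\tfrac1n\sum_i h_\gamma(y_i-\beta_0-x_i^\top\beta)$ together with the ridge part $\lambda(1-\alpha)\tfrac12\|\beta\|_2^2$, and setting $f_j(\beta_j)=\lambda\alpha|\beta_j|$. Here $f_0$ is convex and first-order differentiable (because $h_\gamma$ is $C^1$ and the ridge term is smooth), each $f_j$ is convex, and the nonsmooth part $\sum_j f_j$ is separable, which is exactly the structure Tseng assumes. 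Since $0<\alpha<1$, the ridge term forces $f_H\to\infty$ as $\|\beta\|_2\to\infty$, and the linear growth of $h_\gamma$ then forces $f_H\to\infty$ as $|\beta_0|\to\infty$ for bounded $\beta$; with continuity of $f_H$ this shows $\mathcal{L}$ is compact, so the boundedness hypothesis holds and the iterates admit cluster points.

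Next I would check that each coordinate update is well defined through SNA on all of $\mathcal{L}$, which by Theorem~\ref{superlinear} amounts to nonsingularity of the relevant Newton derivative together with a uniform bound on its inverse. For the $\beta_0$ step, $H_0=\tfrac1n\sum_i\psi_\gamma(\cdot)$, and the standing assumption that every point of $\mathcal{L}$ admits an index $i$ with $|y_i-\beta_0-x_i^\top\beta|\le\gamma$ gives $\psi_\gamma=1/\gamma$ for that $i$, so $H_0\ge 1/(n\gamma)>0$ uniformly on $\mathcal{L}$. For the $(\beta_j,s_j)$ steps I would compute the determinant of $H_j$ in the two cases \eqref{ND1} and \eqref{ND2}: in the first it equals $-\big(\tfrac1n\sum_i\psi_\gamma x_{ij}^2+\lambda(1-\alpha)\big)$ and in the second it equals $-\lambda\alpha$. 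Because $\lambda>0$ and $0<\alpha<1$, both are bounded away from zero, and since $\psi_\gamma$ is bounded by $1/\gamma$ the cofactors are bounded as well, so $\|H_j^{-1}\|$ is uniformly bounded on $\mathcal{L}$. This is precisely where the hypotheses $\lambda>0$, $\alpha\in(0,1)$, and the small-residual condition are consumed.

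It then remains to upgrade local SNA convergence to exact coordinate minimization. At the start of each coordinate update the current iterate lies in $\mathcal{L}$, and since no update increases $f_H$ the corresponding coordinate minimizer also lies in $\mathcal{L}$; hence the SNA starting point and its target are separated by at most $\operatorname{diam}(\mathcal{L})$. Taking $\operatorname{diam}(\mathcal{L})$ small enough places the starting point inside the local basin guaranteed by Theorem~\ref{superlinear}, so the SNA step (iterated to convergence) solves the coordinate KKT system \eqref{KKT_bj}--\eqref{KKT_b0} exactly and returns the exact coordinate minimizer. With exact coordinate minimization in hand, \citet{tseng2001convergence} yields that every cluster point of the SNCD sequence is a stationary point of $f_H$. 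Finally, since $f_H$ is convex every stationary point is a global minimizer; moreover, the ridge term together with the curvature of $h_\gamma$ supplied by the small-residual assumption makes $f_H$ strictly convex, so the minimizer is unique and the monotone, compact iteration converges to it.

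I expect the main obstacle to be this last reconciliation: Theorem~\ref{superlinear} is purely local and only guarantees convergence from nearby starting points, whereas the coordinate-descent machinery presumes each coordinate subproblem is solved exactly. The small-diameter hypothesis is the device that closes this gap, but making it rigorous requires care in showing that the coordinate minimizer never leaves $\mathcal{L}$ and that the uniform invertibility bounds established above hold along the entire trajectory rather than merely at an isolated solution point.
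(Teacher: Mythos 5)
Your proposal is correct and takes essentially the same approach as the paper: the paper also casts SNCD in Tseng's (2001) separable coordinate-descent framework, verifies nonsingularity and uniform invertibility of the coordinate Newton derivatives (the small-residual assumption handling the $\beta_0$ step, and $0<\alpha<1$ together with boundedness of $\psi_\gamma$ handling the $(\beta_j,s_j)$ blocks), and uses the small diameter of $\mathcal{L}$ to keep each local SNA step inside its convergence basin. Your explicit determinant computations, the coercivity argument for boundedness of $\mathcal{L}$, and the closing strict-convexity remark merely flesh out details the paper leaves implicit.
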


\subsubsection{Pathwise Optimization}\label{path}
To actually implement the algorithm, we still need to consider an important issue: its convergence relies on a good initial point, which is usually not guaranteed in practice. For low-dimensional problems we can use line search to ensure global convergence with an arbitrary initial point, but since line search methods involve considerable amounts of function and gradient evaluations, they are not well-suited for high-dimensional cases.

The strategy of pathwise optimization with warm start can help globalize the convergence of the algorithm. With a decreasing sequence of $\lambda$ values, this strategy sequentially solves the optimization problem at each $\lambda_k$ using the optimizer at the previous $\lambda_{k-1}$ as the initial value. 
When $\lambda_{k-1}$, $\lambda_k$ are reasonably close, the initial point $(\widehat{\beta}_{0}(\lambda_{k-1}),\widehat{\beta}(\lambda_{k-1}))$ will be near the optimizer $(\widehat{\beta}_{0}(\lambda_{k}),\widehat{\beta}(\lambda_k))$ as well.
Hence each optimization problem along the path is warm-started with a good initial point, and fast convergence can be achieved. This strategy generates a solution path, which in turn will be useful for tuning parameter selection.


\subsection{Comparison with SNA and Existing Coordinate Descent Type Algorithms}\label{subsection_compare}
\subsubsection{SNA for Penalized Huber Loss Regression and Its Computational Bottleneck}
\label{subsection SNA}
Denote $\mathcal{S}(z)
=(S(z_{1}),\ldots,S(z_{p}))^\top$ and
$d(\beta_0,\beta)=(h^\prime_{\gamma}(y_{1}-\beta_{0}-x_{1}^{\top}\beta),\ldots,h^\prime_{\gamma}(y_{n}-\beta_{0}-x_{n}^{\top}\beta))^\top$, then the KKT conditions \eqref{KKT1_reformulated} can be written compactly as
\begin{equation} \label{KKT_vector}
F(\beta_0,\beta,s) = \left[
\begin{array}{l}
-\frac{1}{n}1^{\top}d(\beta_0,\beta)\\
-\frac{1}{n}X^{\top}d(\beta_0,\beta)+\lambda\alpha s+\lambda(1-\alpha)\beta\\
\beta-\mathcal{S}(\beta+s)
\end{array}\right]
= \mathbf{0}.
\end{equation}

It is easy to verify $F$ is Newton differentiable, then SNA can be directly applied here for solving $F(\beta_0,\beta,s)=\mathbf{0}$. See Appendix B for details.

In terms of computational cost, the first concern is about matrix inversion, since the Newton derivative of $F$ is a $(1+2p)\times(1+2p)$ matrix, for which inversion becomes intractable when $p$ is large. However, the decomposition \eqref{soft-thresholding decompose} leads to an ``active set strategy'' that helps reduce the dimension. Given the $k$th iteration $(\beta_0^k, \beta^k, s^k)$, define the active set $A_k$ and its complement $B_k$ by
\begin{equation}\label{AB_SNA}
A_k= \{ j:|\beta_{j}^k+s_{j}^k|>1\} \text{ and } B_k= \{j:|\beta_{j}^k+s_{j}^k|\leq 1\}.
\end{equation}
Then the Newton-type iteration of SNA is decomposed into two parts $A_k$ and $B_k$ and only the computation of $\beta_0^{k+1}, \beta_{A_k}^{k+1}$ requires inverting a matrix, the dimension of which is only $(1+|A_k|)\times (1+|A_k|)$. In general, $|A_k|$ can be as large as $p$. But since pathwise optimization is implemented, the algorithm is warm-started at each $\lambda$ value. Hence $A_k$ is usually not too much different from the support of the optimizer, which tends to be a sparse subset of $\{1,\ldots,p\}$.

The real bottleneck is in matrix multiplication. Let $\psi_\gamma$ be as in \eqref{psi}. Let $X^{*}=(\mathbf{1}_{n}\; X)$ and $\Psi_k = \frac{1}{n}\text{diag}(\psi_\gamma(~y_1 -\beta_0^k-x_1^\top\beta^k),\ldots,\psi_\gamma(y_n-\beta_0^k-x_n^\top\beta^k))$.
Then as shown in Appendix B, each iteration includes re-computing and re-partitioning $X^{* \top} \Psi_k X^*$, which involves $O(np^2)$ arithmetic operations that become formidable for large $p$. The diagonality of $\Psi_k$ and the symmetry of $X^{* \top} \Psi_k X^*$ could be utilized to reduce computation, but the magnitude remains $O(np^2)$. Since $X^{*\top}\Psi_{k}X^{*}=
\frac{1}{n}\sum_i\psi_\gamma(y-\beta_{0}^{k}-x_i^{\top}\beta^{k})x_{i}^{*}x_{i}^{*\top}$, caching all the $(1+p)\times (1+p)$ matrices $x_i^{*} x_i^{*\top}$ would also speed up the
computation, but since there are $n$ such matrices, such an implementation would be memory-inefficient.

\subsubsection{SNCD vs. SNA}
The  two algorithms mainly differ in the following aspects:
\begin{itemize}
\item
Consider a full update on $(\beta_0, \beta, s)$ as one iteration. The computational cost per iteration of SNCD is $O(np)$, compared with $O(np^2)$ for SNA.

\item
The SNCD iterations consist of univariate and bivariate updates only while SNA involves 
matrix inversions.

\item
While SNA has locally superlinear convergence rate in theory, SNCD is at most linear. It is a worthwhile compromise, however, considering that SNCD reduces the computational cost per iteration from $O(np^2)$ to $O(np)$ and that warm-starting due to pathwise optimization strategy allows SNCD to converge quickly.

\item
In practice, SNCD is much faster; and SNCD always converges while SNA diverges in some high-dimensional cases even when 
pathwise optimization is used.

\end{itemize}

\subsubsection{SNCD vs. Existing Coordinate Descent Type Algorithms}

SNCD also differs from the existing coordinate descent algorithms for penalized regression (\citealp{friedman2007pathwise}; \citealp{friedman2010regularization}; \citealp{simon2011regularization};
\citealp{breheny2011coordinate})
in the following aspects:
\begin{itemize}
\item
It generalizes coordinate descent to work on a wider class of models where the loss functions, like the Huber loss, only need to be first-order differentiable. As shown in the next section, it is also extended to a case with a nondifferentiable loss, i.e. the quantile loss, via smoothing approximation.
\item
It is directly motivated from the KKT conditions as a root-finding method, where the subgradients $s_j$'s are treated as independent variables that are connected with $\beta_j$'s through the equation $\beta_j - S(\beta_j+s_j) = 0$.
\item
Each pair of $(\beta_j, s_j)$ is updated simultaneously with different formulas for two situations $|\tilde{\beta}_{j}+\tilde{s}_{j}|>1$ and $|\tilde{\beta}_{j}+\tilde{s}_{j}|\leq 1$.
This is quite different from the coordinate descent algorithms mentioned above that only update the coefficients $\beta_j$'s.
\end{itemize}

\section{SNCD for Penalized Quantile Regression}
\label{section_quantile}
\subsection{Description}

For the quantile loss function $\ell = \rho_\tau$, \eqref{gen_loss} becomes
\begin{equation}\label{quantile_enet}
\min_{\beta_{0},\beta} f_Q(\beta_0, \beta) = \frac{1}{n}\sum_{i}\rho_\tau(y_{i}-\beta_{0}-x_{i}^{\top}\beta)+ \lambda P_{\alpha}(\beta).
\end{equation}

SNCD cannot be directly applied to this problem since it requires the first-order derivatives of the loss function, 
but $\rho_\tau$ is not differentiable. However, note that
\[
\rho_{\tau}(t)  =  (1-\tau)t_-+\tau t_+
 =  \frac{1}{2}\left\{ |t|+(2\tau-1)t\right\}.
 \]
Since $h_\gamma(t) \rightarrow |t|$ as $\gamma \rightarrow 0^+$, $\rho_\tau(t) \approx \frac{1}{2}\left\{ h_\gamma(t)+(2\tau-1)t\right\}$ for small $\gamma$ and the solutions to penalized quantile regression can be approximated by
\begin{equation}\label{RA_enet}
\min_{\beta_{0},\beta}\; f_{HA}(\beta_{0},\beta)=\frac{1}{2n}\sum\left\{ h_{\gamma}(y_{i}-\beta_{0}-x_{i}^{\top}\beta)+(2\tau-1)(y_{i}-\beta_{0}-x_{i}^{\top}\beta)
\right\} + \lambda P_{\alpha}(\beta),
\end{equation}
where ``HA" stands for Huber approximation. This problem is easier to handle since its loss function is first-order differentiable.
The following result provides theoretical support for this smoothing approximation.

\begin{theorem}\label{convergence_quantile}
Given any $\lambda \geq 0 $, $0 <\tau < 1$ and $\{\gamma_k\}$ converging to 0, let $(\beta_{0k},\beta_k)$ be a minimizer of $f_{HA}(\beta_0, \beta;\lambda, \tau, \gamma_k)$. Then every cluster point of sequence $\{(\beta_{0k},\beta_k)\}$ is a minimizer of $f_Q(\beta_0, \beta; \lambda, \tau)$.
\end{theorem}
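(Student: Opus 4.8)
The plan is to prove this by the classical argument that convergence of minimizers follows from convergence of the objective functions, where the crucial and essentially the only substantive input is a clean \emph{uniform} error bound between the Huber loss and the absolute value. Convexity will not actually be needed, and no coercivity or existence of minimizers has to be checked, since the statement takes the minimizers $(\beta_{0k},\beta_k)$ as given and only concerns their cluster points.

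First I would record the pointwise identity already noted before \eqref{RA_enet}, namely $\frac{1}{2}\{|t|+(2\tau-1)t\}=\rho_\tau(t)$, so that the sole discrepancy between $f_{HA}(\cdot;\gamma)$ and $f_Q$ comes from replacing $|t|$ by $h_\gamma(t)$ in each residual term. The key quantitative step is the elementary uniform bound
\[
0 \le |t| - h_\gamma(t) \le \frac{\gamma}{2}\qquad\text{for all } t\in\mathbb{R},
\]
obtained by comparing $h_\gamma$ with $|\cdot|$ in the two regimes $|t|\le\gamma$ and $|t|>\gamma$ of \eqref{huberfunc}. Writing $r_i=y_i-\beta_0-x_i^\top\beta$ and observing that the penalty terms cancel, one gets $f_{HA}(\beta_0,\beta;\gamma)-f_Q(\beta_0,\beta)=\frac{1}{2n}\sum_i\{h_\gamma(r_i)-|r_i|\}$, whence
\[
\sup_{(\beta_0,\beta)} \bigl| f_{HA}(\beta_0,\beta;\gamma) - f_Q(\beta_0,\beta)\bigr| \le \frac{\gamma}{4}.
\]
This is the heart of the proof: the smoothed objectives converge to $f_Q$ \emph{uniformly} on all of $\mathbb{R}^{p+1}$, not merely pointwise.

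Next I would run the convergence argument. Let $(\beta_0^*,\beta^*)$ be a cluster point of $\{(\beta_{0k},\beta_k)\}$, so that $(\beta_{0k_j},\beta_{k_j})\to(\beta_0^*,\beta^*)$ along a subsequence with $\gamma_{k_j}\to0$, and fix an arbitrary competitor $(\beta_0',\beta')$. Optimality of $(\beta_{0k_j},\beta_{k_j})$ for $f_{HA}(\cdot;\gamma_{k_j})$ gives $f_{HA}(\beta_{0k_j},\beta_{k_j};\gamma_{k_j})\le f_{HA}(\beta_0',\beta';\gamma_{k_j})$. On the right the uniform bound sends the value to $f_Q(\beta_0',\beta')$. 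On the left I would split
\[
f_{HA}(\beta_{0k_j},\beta_{k_j};\gamma_{k_j}) = \bigl[f_{HA}-f_Q\bigr](\beta_{0k_j},\beta_{k_j};\gamma_{k_j}) + f_Q(\beta_{0k_j},\beta_{k_j}),
\]
where the bracket tends to $0$ by the uniform bound and $f_Q(\beta_{0k_j},\beta_{k_j})\to f_Q(\beta_0^*,\beta^*)$ by continuity of $f_Q$ (both $\rho_\tau$ and $P_\alpha$ are continuous). Passing to the limit yields $f_Q(\beta_0^*,\beta^*)\le f_Q(\beta_0',\beta')$, and since the competitor was arbitrary, $(\beta_0^*,\beta^*)$ minimizes $f_Q$.

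I do not anticipate a genuine obstacle. The only place requiring care is the bookkeeping in the left-hand limit, where one must cleanly separate the $\gamma$-dependence (controlled by the uniform bound, uniformly in the argument) from the dependence on the moving point $(\beta_{0k_j},\beta_{k_j})$ (controlled by continuity of $f_Q$); conflating these two effects is the only likely slip. If one wanted more, the same bound also shows $f_{HA}(\beta_{0k};\gamma_k)\to\min f_Q$, i.e.\ the optimal values converge, but this is not needed for the stated conclusion.
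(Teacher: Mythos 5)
Your proof is correct and takes essentially the same approach as the paper's (the paper proves Theorem \ref{convergence_lad} and declares the proof of Theorem \ref{convergence_quantile} analogous and omits it): both arguments rest on the uniform sandwich $|t|-\gamma/2\le h_\gamma(t)\le |t|$, which makes the smoothed objective uniformly close to the target objective, and then combine optimality of the smoothed minimizers with continuity of the limit objective to pass to the limit. The only cosmetic difference is that you test against an arbitrary competitor, thereby avoiding any appeal to the existence of a minimizer of $f_Q$, whereas the paper tests against a minimizer $\widehat{\theta}_A$ of the limit objective and runs an $\epsilon$-argument; this is a minor refinement rather than a different route.
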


Now we can derive the KKT conditions and apply SNCD to solve \eqref{RA_enet}. Due to its similarity to the penalized Huber loss regression, we omit the details. At each iteration, with the current estimates denoted by $(\tilde{\beta}_0, \tilde{\beta}, \tilde{s})$ and residuals by $\tilde{r}_i$, the SNCD updates are
\begin{enumerate}[(i)]
\item
For $\beta_0$:
\[
\beta_{0}\leftarrow\tilde{\beta}_{0}+\frac{\sum_{i}\left\{ h^\prime_{\gamma}(\tilde{r}_{i})+2\tau-1\right\} }{\sum_{i}\psi_{\gamma}(\tilde{r}_{i})}.
\]
\item
For $(\beta_j, s_j)$:
\begin{enumerate}[(a)]
\item
If $|\tilde{\beta}_{j}+\tilde{s}_{j}|>1$, then
\begin{eqnarray*}
\beta_{j} & \leftarrow & \tilde{\beta}_{j}+\frac{\frac{1}{2n}\sum_{i}\left\{ h^\prime_{\gamma}(\tilde{r}_{i})+2\tau-1\right\} x_{ij}-\lambda\alpha \text{sgn}(\tilde{\beta}_{j}+\tilde{s}_{j})-\lambda(1-\alpha)\tilde{\beta}_{j}}
{\frac{1}{2n}\sum_{i}\psi_{\gamma}(\tilde{r}_i)x_{ij}^{2}+\lambda(1-\alpha)}, \\
s_{j} & \leftarrow & \text{sgn}(\tilde{\beta}_{j}+\tilde{s}_{j}).
\end{eqnarray*}
\item
If $|\tilde{\beta}_{j}+\tilde{s}_{j}|\leq1$, then
\begin{eqnarray*}
\beta_{j} & \leftarrow & 0,\\
s_{j} & \leftarrow & \frac{\frac{1}{2n}\sum_{i}\left\{ h^\prime_{\gamma}(\tilde{r}_{i})+2\tau-1\right\} x_{ij}+\tilde{\beta}_{j}\cdot\frac{1}{2n}
\sum_{i}\psi_{\gamma}(\tilde{r}_{i})x_{ij}^{2}}{\lambda\alpha}.
\end{eqnarray*}
\end{enumerate}
\end{enumerate}

The previous discussions on convergence and pathwise optimization also apply here. And similar to Theorem \ref{SNCD_huber}, we have the following result.
\begin{theorem}\label{SNCD_quant}
For problem \eqref{quantile_enet}, let $\lambda > 0$, $\alpha \in (0,1)$ and the initial $f_{HA}$ value be $f_{HA}^0$.   Assume for every point $(\beta_0, \beta)$ in the level set $\mathcal{L} = \{(\beta_0, \beta): f_{HA}(\beta_0, \beta) \leq f_{HA}^0 \}$ there exists $i\in\{1,\ldots,n\}$ such that $|y_{i}-\beta_{0}-x_{i}^{\top}\beta|\leq\gamma$ . Then SNCD iterations
converge to a global minimizer provided that the diameter of $\mathcal{L}$ is sufficiently small.
\end{theorem}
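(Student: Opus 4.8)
The plan is to mirror the argument for Theorem \ref{SNCD_huber}, since SNCD is applied to the smoothed objective $f_{HA}$ in \eqref{RA_enet} rather than to $f_Q$ directly, and $f_{HA}$ differs from $f_H$ only by the extra affine term $\tfrac{2\tau-1}{2n}\sum_i(y_i-\beta_0-x_i^\top\beta)$ inside the loss (together with an overall factor of $1/2$). First I would cast the problem in the separable form required by the coordinate descent theory of \cite{tseng2001convergence}: write $f_{HA}=f_0(\beta_0,\beta)+\sum_{j=1}^p\lambda\alpha|\beta_j|$, where $f_0$ collects the differentiable pieces, namely the Huber-approximation loss $\tfrac{1}{2n}\sum_i\{h_\gamma(\cdot)+(2\tau-1)(\cdot)\}$ together with the ridge term $\tfrac{\lambda(1-\alpha)}{2}\|\beta\|_2^2$. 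Since $h_\gamma$ is convex and first-order differentiable, the added term is affine, and the ridge term is smooth convex, the function $f_0$ is convex and $C^1$; the remaining part $\sum_j\lambda\alpha|\beta_j|$ is convex and separable, exactly as the framework demands.

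Next I would verify that the level set $\mathcal{L}$ is bounded so that Tseng's hypotheses hold. With $\alpha\in(0,1)$ and $\lambda>0$ the ridge term keeps $\beta$ bounded on any sublevel set, and boundedness in $\beta_0$ follows from the coercivity of the per-observation map $t\mapsto\tfrac{1}{2}\{h_\gamma(t)+(2\tau-1)t\}$, which behaves like $\tau t$ as $t\to+\infty$ and like $(\tau-1)t$ as $t\to-\infty$ and hence tends to $+\infty$ in both directions because $0<\tau<1$. I would then check the two requirements that let the local semismooth Newton steps succeed at each coordinate update. For the $\beta_0$ step the relevant Newton derivative is $\tfrac{1}{2n}\sum_i\psi_\gamma(\tilde r_i)$, which is nonzero and bounded away from $0$ precisely when at least one residual satisfies $|y_i-\beta_0-x_i^\top\beta|\leq\gamma$, the standing assumption in the statement. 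For the $(\beta_j,s_j)$ step the same algebra as in Section \ref{section_huber} shows that $0<\alpha<1$ makes the denominators $\tfrac{1}{2n}\sum_i\psi_\gamma(\tilde r_i)x_{ij}^2+\lambda(1-\alpha)$ and $\lambda\alpha$ strictly positive, so the $2\times2$ Newton derivatives in the analogues of \eqref{ND1} and \eqref{ND2} are nonsingular with uniformly bounded inverses, using that $\psi_\gamma$ takes values in $\{0,1/\gamma\}$ and is thus bounded.

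With these ingredients in place, I would invoke Theorem \ref{superlinear} to conclude that, whenever a current iterate and the coordinatewise optimum are sufficiently close, the semismooth Newton iterations solve each coordinate subproblem \eqref{KKT_b0}/\eqref{KKT_bj} exactly. Since $f_{HA}$ decreases monotonically along SNCD, every iterate remains in $\mathcal{L}$, so the separation between a current point and any coordinate minimizer is controlled by $\mathrm{diam}(\mathcal{L})$; requiring this diameter to be small enough places each Newton solve inside its basin of convergence. At that point SNCD is a genuine block coordinate descent with exact coordinate minimization on a convex objective of the required separable form, so \cite{tseng2001convergence} yields convergence of the iterates to a minimizer of the smoothed problem, which is global by convexity.

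I expect the main obstacle to be the same delicate point as in Theorem \ref{SNCD_huber}: certifying that a local Newton scheme actually returns the exact coordinate minimizer rather than merely an approximate one. Because $h_\gamma'$, and hence each coordinate equation, is only piecewise linear, a single Newton step need not land in the correct linear piece, so the argument genuinely relies on Theorem \ref{superlinear} together with the small-diameter condition to guarantee that the Newton recursion stays in, and converges within, the active piece. The extra affine term from the quantile reformulation is benign, since it shifts the gradient but not the Newton derivative, so no new difficulty arises from the switch to the quantile loss beyond re-establishing the coercivity bound above.
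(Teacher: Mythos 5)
Your proposal is correct and follows essentially the same route as the paper, which gives no separate proof for this theorem but simply appeals to the discussion preceding Theorem \ref{SNCD_huber}: casting the problem in the separable form required by \cite{tseng2001convergence}, checking nonsingularity and bounded invertibility of the coordinatewise Newton derivatives (the residual assumption for $\beta_0$, and $0<\alpha<1$ for the $(\beta_j,s_j)$ pairs), and using the small-diameter condition on $\mathcal{L}$ to keep each local Newton solve within its basin of convergence. Your added verification of coercivity of $t\mapsto\tfrac{1}{2}\{h_\gamma(t)+(2\tau-1)t\}$ (using $0<\tau<1$) is a detail the paper leaves implicit, and it correctly handles the only feature distinguishing this case from the Huber one.
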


\subsection{The Choice of $\gamma$ Values}

For the approximation to work well, we need to use a sufficiently small $\gamma$; but when $\gamma$ gets too close to 0, the algorithm becomes ill-conditioned. Therefore we designed a data-dependent heuristic method for picking appropriate $\gamma$ values. At each $\lambda_k$, we determine $\gamma_k$ depending on the residuals $\tilde{r}_i$'s given by the previous optimizer $(\widehat{\beta}_0(\lambda_{k-1}), \widehat{\beta}(\lambda_{k-1}))$
as follows.

\begin{enumerate}[i.]
\item Initialize residuals $\tilde{r}_i \leftarrow y_i$;
\item For each $\lambda_k$:
\begin{enumerate}[(a)]
\item
$\gamma_k \leftarrow \mbox{10-th percentile of } \{|\tilde{r}_i|\}$;
\item
$\gamma_k \leftarrow \min\{\gamma_k, \gamma_{k-1}\}$;
\item
$\gamma_k \leftarrow \max\{\gamma_k, 0.001\}$;
\item
solve the problem with $\gamma_k, \lambda_k$ and update $\tilde{r}_i$'s at each iteration.
\end{enumerate}
\end{enumerate}

In step (a) we  pick a value smaller than the magnitudes of 90\% of all residuals for which the loss function is the same as the quantile loss so the approximation should work well. This also keeps $\gamma_k$ above the magnitudes of 10\% of the residuals, which ensures the numerical stability of the algorithm. Bracketing in (b) and (c) are additional safeguards for stability. 

\subsection{Related Convergence Results}
The key to the smoothing approximation is the fact that $h_\gamma(t)$ converges to $|t|$ as $\gamma$ tends to 0. In fact, it is also easy to see that with $\gamma$ as a scaling factor, $\gamma h_\gamma(t)$ converges to the squared loss $\frac{t^2}{2}$ when $\gamma$ goes to infinity. Hence, in the same spirit of Theorem \ref{convergence_quantile}, we also show the connections between the penalized Huber loss regression and two important regression models with respectively the absolute loss and the squared loss, i.e. the Least Absolute Deviations (LAD) and the Least Squares (LS).

To simplify the notation, let $\theta = (\beta_0, \beta)$ and $P(\cdot)$ be a general penalty function.
Denote
\[
\begin{array}{lll}
\underset{\theta}{\min}\;f_{H}(\theta;\lambda,\gamma) & = & \frac{1}{n}\sum_{i}h_{\gamma}(y_{i}-\beta_{0}-x_{i}^{\top}\beta)+\lambda P(\beta),\\
\\
\underset{\theta}{\min}\;f_{A}(\theta;\lambda) & = & \frac{1}{n}\sum_{i}|y_{i}-\beta_{0}-x_{i}^{\top}\beta|+\lambda P(\beta),\\
\\
\underset{\theta}{\min}\;f_{S}(\theta;\lambda) & = & \frac{1}{2n}\sum_{i}(y_{i}-\beta_{0}-x_{i}^{\top}\beta)^{2}+\lambda P(\beta).
\end{array}
\]

Then we have $f_{H}(\theta;\lambda,\gamma) \rightarrow f_{A}(\theta;\lambda)$ as $\gamma \rightarrow 0$; $\gamma f_{H}(\theta;\lambda/\gamma,\gamma) \rightarrow f_{S}(\theta;\lambda)$ as $\gamma \rightarrow \infty$.
And the following results establish the convergence between their optimizers.
\begin{theorem}\label{convergence_lad}
Given any $\lambda \geq 0$ and $\{\gamma_k\}$ converging to 0, let $\theta_k$ be a minimizer of $f_{H}(\theta;\lambda, \gamma_k)$. Then every cluster point of sequence $\{\theta_k\}$ is a minimizer of $f_A(\theta; \lambda)$.
\end{theorem}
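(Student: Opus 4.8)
The plan is to exploit the fact that the convergence $h_\gamma(t)\to|t|$ is not merely pointwise but \emph{uniform} in $t$, which collapses the statement into a routine stability-of-minimizers argument. First I would record the elementary estimate
\[
0 \le |t| - h_\gamma(t) \le \frac{\gamma}{2} \qquad \text{for all } t\in\mathbb{R},
\]
obtained by splitting into the two regimes defining $h_\gamma$: on the quadratic piece $|t|\le\gamma$ the gap $|t|-t^2/(2\gamma)$ is nonnegative and maximized at $t=\pm\gamma$ with value $\gamma/2$, while on the linear piece $|t|>\gamma$ the gap equals $\gamma/2$ identically. Summing this over the $n$ residuals $r_i(\theta)=y_i-\beta_0-x_i^\top\beta$ and noting that the penalty term $\lambda P(\beta)$ is common to both objectives, I obtain the two-sided bound
\[
f_A(\theta;\lambda) - \frac{\gamma}{2} \le f_H(\theta;\lambda,\gamma) \le f_A(\theta;\lambda), \qquad \text{uniformly in } \theta.
\]

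With this in hand the core argument is standard. Let $\theta^*$ be a cluster point of $\{\theta_k\}$, so that $\theta_{k_j}\to\theta^*$ along some subsequence, and fix an arbitrary competitor $\theta'$. Using that $\theta_k$ minimizes $f_H(\cdot;\lambda,\gamma_k)$, together with the uniform bound applied in the appropriate direction on each end, I get the chain
\[
f_A(\theta_k;\lambda) - \frac{\gamma_k}{2} \le f_H(\theta_k;\lambda,\gamma_k) \le f_H(\theta';\lambda,\gamma_k) \le f_A(\theta';\lambda),
\]
hence $f_A(\theta_k;\lambda)\le f_A(\theta';\lambda)+\gamma_k/2$. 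I would then pass to the limit along $k_j$: since $\gamma_{k_j}\to0$ and $f_A(\cdot;\lambda)$ is lower semicontinuous (the loss $\frac{1}{n}\sum_i|r_i(\theta)|$ is continuous and the penalty $P$ is lower semicontinuous, as holds for the elastic-net penalty used throughout), this yields $f_A(\theta^*;\lambda)\le\liminf_j f_A(\theta_{k_j};\lambda)\le f_A(\theta';\lambda)$. As $\theta'$ was arbitrary, $\theta^*$ minimizes $f_A(\cdot;\lambda)$, which is the claim.

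I do not expect a serious obstacle, since the uniform estimate does essentially all of the work and the remainder is an elementary $\liminf$ argument — this is precisely what lets one avoid the heavier epi-convergence machinery that would be needed if only pointwise convergence of the losses were available. The only points requiring care are bookkeeping ones: applying the two-sided bound in the correct direction at each end of the minimizer inequality, and invoking just enough regularity of $P$ (lower semicontinuity suffices) to justify the limit passage. Note also that the statement is vacuous when $\{\theta_k\}$ has no cluster point, so no coercivity or compactness of level sets needs to be established; this is exactly where the argument is lighter than a full existence-and-convergence result. The companion Theorem~\ref{convergence_quantile} follows the same template, the only change being the extra linear term $(2\tau-1)\,r_i(\theta)$, which is common to the approximating and limiting objectives and therefore cancels in the uniform bound.
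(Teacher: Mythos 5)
Your proof is correct and follows essentially the same route as the paper's: the uniform sandwich bound $f_A(\theta;\lambda)-\gamma/2 \le f_H(\theta;\lambda,\gamma) \le f_A(\theta;\lambda)$ followed by a stability-of-minimizers limit argument. The only (minor) difference is that you compare against an arbitrary competitor $\theta'$ and use lower semicontinuity, which avoids the paper's implicit assumption that $f_A$ attains its minimum at some $\widehat{\theta}_A$; otherwise the two arguments coincide.
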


\begin{theorem}\label{convergence_ls}
Given any $\lambda \geq 0$ and $\{\gamma_k\}$ converging to $\infty$, let $\theta_k$ be a minimizer of $f_{H}(\theta;\lambda/\gamma_k,\gamma_k)$. Then every cluster point of sequence $\{\theta_k\}$ is a minimizer of $f_S(\theta; \lambda)$.
\end{theorem}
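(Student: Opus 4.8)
The plan is to follow the same two-ingredient scheme (pointwise convergence of the objectives together with the minimality inequality) that underlies Theorems \ref{convergence_quantile} and \ref{convergence_lad}, now specialized to the rescaling $\gamma_k \to \infty$. Write $r_i(\theta) = y_i - \beta_0 - x_i^{\top}\beta$ and set $g_k(\theta) = \gamma_k f_H(\theta;\lambda/\gamma_k,\gamma_k) = \frac{1}{n}\sum_i \gamma_k h_{\gamma_k}(r_i(\theta)) + \lambda P(\beta)$. Multiplying the objective by the positive constant $\gamma_k$ does not change its set of minimizers, so $\theta_k$ also minimizes $g_k$. The whole argument then rests on one elementary identity for the rescaled Huber function: for every $t$ and $\gamma>0$,
\[
\gamma h_\gamma(t) = \frac{t^2}{2} - \frac{1}{2}\left(|t|-\gamma\right)_+^2, \qquad \text{so} \qquad 0 \le \frac{t^2}{2} - \gamma h_\gamma(t) = \frac{1}{2}\left(|t|-\gamma\right)_+^2 .
\]
In particular $\gamma h_\gamma(t) = t^2/2$ as soon as $\gamma \ge |t|$, which makes the rescaled loss coincide \emph{exactly} with the squared loss on any bounded set of residuals once $\gamma$ is large enough.

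First I would establish pointwise convergence $g_k(\theta) \to f_S(\theta)$ for each fixed $\theta$. Since the residuals $r_i(\theta)$ are then fixed numbers and $\gamma_k \to \infty$, for all large $k$ we have $\gamma_k \ge \max_i |r_i(\theta)|$ and hence $\gamma_k h_{\gamma_k}(r_i(\theta)) = r_i(\theta)^2/2$; the penalty term $\lambda P(\beta)$ is constant in $k$. Thus $g_k(\theta) \to \frac{1}{2n}\sum_i r_i(\theta)^2 + \lambda P(\beta) = f_S(\theta)$. Next, let $\bar\theta = (\bar\beta_0,\bar\beta)$ be a cluster point, realized along a subsequence $\theta_{k_j} \to \bar\theta$. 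Being convergent, this sequence is bounded, so the residuals $r_i(\theta_{k_j})$ are bounded by some $R$; since $\gamma_{k_j} \to \infty$, for all large $j$ we have $\gamma_{k_j} \ge R \ge \max_i |r_i(\theta_{k_j})|$, whence the loss term equals $\frac{1}{2n}\sum_i r_i(\theta_{k_j})^2$ and converges to $\frac{1}{2n}\sum_i r_i(\bar\theta)^2$ by continuity of the residuals. Using lower semicontinuity of the penalty (continuity for the elastic net), $\liminf_j \lambda P(\beta_{k_j}) \ge \lambda P(\bar\beta)$, so $\liminf_j g_{k_j}(\theta_{k_j}) \ge f_S(\bar\theta)$.

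Finally, for an arbitrary competitor $\theta$, the minimality of $\theta_{k_j}$ gives $g_{k_j}(\theta_{k_j}) \le g_{k_j}(\theta)$; taking $\limsup_j$ and invoking the pointwise convergence of the right-hand side yields
\[
f_S(\bar\theta) \le \liminf_j g_{k_j}(\theta_{k_j}) \le \limsup_j g_{k_j}(\theta_{k_j}) \le \lim_j g_{k_j}(\theta) = f_S(\theta).
\]
Since $\theta$ is arbitrary, $\bar\theta$ minimizes $f_S$, which is the claim.

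The only genuine subtlety — and the step I would treat most carefully — is the loss term along the subsequence, where $\gamma_{k_j} \to \infty$ and $\theta_{k_j}$ move simultaneously; a naive pointwise statement does not suffice because the gap $\frac{1}{2}(|t|-\gamma)_+^2$ is unbounded in $t$. The resolution is exactly that a convergent sequence has bounded residuals, so the transition from the rescaled Huber loss to the squared loss becomes exact for all large $j$, and no delicate liminf estimate on the loss is actually needed; lower semicontinuity is invoked only for the penalty. This is the mirror image of Theorem \ref{convergence_lad}, where the analogous sandwich $|t| - \gamma/2 \le h_\gamma(t) \le |t|$ even delivers uniform convergence of the loss.
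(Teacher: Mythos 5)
Your proposal is correct and follows essentially the same route as the paper's proof: both hinge on the facts that $\gamma h_\gamma(t) \le t^2/2$ with exact equality once $\gamma \ge |t|$, and that the residuals along a convergent (sub)sequence are bounded, so the rescaled Huber objective coincides exactly with $f_S$ at $\theta_{k_j}$ for all large $j$, after which the minimality inequality transfers in the limit. The only differences are cosmetic: you compare against an arbitrary competitor $\theta$ rather than a fixed minimizer $\widehat{\theta}_S$ of $f_S$, and you work with an explicit subsequence where the paper says ``without loss of generality.''
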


Therefore, the penalized Huber loss regression bridges the gap between LAD and LS regression as $\gamma$ varies from $0$ to $\infty$.
The solutions of the penalized Huber loss regression constitute a rich spectrum from the solution of LAD regression to that of LS regression. This property gives us more flexibility in fitting high-dimensional regression models.

\section{Adaptive Strong Rule for Screening Predictors}
\label{section_asr}
\cite{tibshirani2012strong} proposed the (sequential) strong rule for screening out predictors in pathwise optimization of penalized regression models for computational efficiency.
However, when applied to the penalized Huber loss regression and quantile regression, we discover that the strong rule suffers from the issue of ``violations" that is explained below. To deal with this issue and enhance algorithmic stability, we develop an adaptive version of the strong rule.

We first describe the strong rule. Consider a general elastic-net penalized regression
\[
\underset{\beta_0, \beta}{\min} \; \frac{1}{n}\sum_{i=1}^n \ell(y_i - \beta_{0} - x_i^\top\beta)+\lambda P_\alpha(\beta).
\]
where $\ell$ is convex and differentiable. Then the optimizer $(\widehat{\beta}_{0}(\lambda),\widehat{\beta}(\lambda))$ satisfies the KKT conditions
\[
\begin{cases}
-\frac{1}{n}\sum_i \ell^\prime(y_i - \widehat{\beta}_{0} - x_i^\top\widehat{\beta})  = 0,\\
-\frac{1}{n}\sum_i \ell^\prime(y_i - \widehat{\beta}_{0} - x_i^\top\widehat{\beta})x_{ij} +\lambda\alpha \widehat{s}_{j}+\lambda(1-\alpha)\widehat{\beta}_{j} = 0,\\
\widehat{s}_{j}\in\partial|\widehat{\beta}_{j}|, \quad j=1,\ldots, p.
\end{cases}
\]

The unpenalized intercept $\beta_0$ is always in the model, so there is no screening rule for it. For $\beta_j$, let $c_{j}(\lambda)=-\frac{1}{n}\sum_i \ell^\prime(y_i - \widehat{\beta}_{0} - x_i^\top\widehat{\beta})x_{ij}$. Assume each $c_j$ is $\alpha$-Lipschitz continuous,
\begin{equation}\label{lip}
|c_{j}(\lambda)-c_{j}(\lambda^{\prime})|\leq\alpha|\lambda-\lambda^{\prime}|,
\ \text{ for every } \ \lambda, \lambda^\prime >0.
\end{equation}
Then at each new $\lambda_k$ in the solution path, given the previous optimizer $(\widehat{\beta}_{0}(\lambda_{k-1}),\widehat{\beta}(\lambda_{k-1}))$ and the corresponding $c_j (\lambda_{k-1})$'s, the strong rule discards predictor $j$ if
\begin{equation}\label{strong}
|c_{j}(\lambda_{k-1})|<\alpha(2\lambda_{k}-\lambda_{k-1}).
\end{equation}
The reasoning is as follows. Assume \eqref{lip} and \eqref{strong} hold, since $\lambda_{k-1} > \lambda_k$, we have
\begin{eqnarray*}
|c_{j}(\lambda_{k})| & \leq & |c_{j}(\lambda_{k})-c_{j}(\lambda_{k-1})|+|c_{j}(\lambda_{k-1})|\\
 & < & \alpha(\lambda_{k-1}-\lambda_{k})+\alpha(2\lambda_{k}-\lambda_{k-1})\\
 & = & \alpha\lambda_{k}.
\end{eqnarray*}
It follows that $\widehat{\beta}_{j}(\lambda_{k})=0$, since by contradiction $\ensuremath{\widehat{\beta}_{j}(\lambda_{k})\neq0}$ implies $ \widehat{s}_{j}(\lambda_k)=\text{sgn}(\widehat{\beta}_{j}(\lambda_{k}))$ thus $|c_{j}(\lambda_{k})|=\lambda_{k}\alpha+\lambda_{k}(1-\alpha)|\widehat{\beta}_{j}(\lambda_{k})|\geq\lambda_{k}\alpha$.

The effectiveness of the strong rule relies on the assumption \eqref{lip}, which does not necessarily hold. So 
application of the rule should always be accompanied with a check of the KKT conditions. A pathwise optimization algorithm incorporating the strong rule proceeds as follows.

For each $\lambda_k$,
\begin{enumerate}[(a)]
\item
Compute the eligible set $E=\{j:\;|c_{j}(\lambda_{k-1})|\geq\alpha(2\lambda_{k}-\lambda_{k-1})\}$;
\item
Solve the problem using only the predictors in $E$;
\item
Check KKT conditions on the solution: $|c_{j}(\lambda_{k})|\leq\alpha\lambda_{k}$ for $j\in E^c$. We are done if there are no violations; otherwise, add violating indices to $E$ and repeat (b) and (c).
\end{enumerate}

For the penalized least squares and logistic regression we have not encountered any violation, but it a different story for the penalized Huber loss regression and quantile regression. Using the strong rule for these two models, we often encounter a large number of violations, indicating that the rule may have been too restrictive.
Since the algorithm is re-run each time violations are found, the overall efficiency is affected. Thus reducing the number of violations can enhance the algorithmic stability and lead to potential speedup.

A simple approach is to use a multiplier $M>1$ and relax the assumption \eqref{lip} to the following: $\forall \lambda, \lambda^\prime >0$,
\[
|c_{j}(\lambda)-c_{j}(\lambda^{\prime})|\leq\alpha M|\lambda-\lambda^{\prime}|.
\]
Accordingly, we will need to change \eqref{strong} to

\[
|c_{j}(\lambda_{k-1})|<\alpha\left(\lambda_{k}+M(\lambda_{k}-\lambda_{k-1})\right).
\]

However, this strategy does not work well in practice, since it is difficult to pre-determine an appropriate value of $M$ that suits all values of $\lambda$ in the solution path.

Hence we propose an ``adaptive" version that allows $M$ to vary with $\lambda$. This rule automatically estimates a localized $M(\lambda)$ that varies and adapts to the trends of the solution paths, which reduces the number of violations by a large margin without sacrificing speed. The idea is as follows.

Let $M(\lambda_0)=1$. Then at each $\lambda_k$,
\begin{enumerate}[(a)]
\item
use $M(\lambda_{k-1})$ to construct the eligible set, i.e. let
\[E=\{j:\;|c_{j}(\lambda_{k-1})|\geq \alpha\left(\lambda_{k}+M(\lambda_{k-1})(\lambda_{k}-\lambda_{k-1})\right)\};
\]
\item
solve the problem using only the predictors in $E$, and check KKT conditions as before; update $E$ and repeat step (b) if violations occur;
\item
compute $M(\lambda_k)$ based on the local trend of $c_j$'s:
\[
M(\lambda_{k})=\frac{\underset{1\leq j\leq p}{\max}|c_{j}(\lambda_{k-1})-c_{j}(\lambda_{k})|}{\alpha(\lambda_{k-1}-\lambda_{k})}.
\]
\end{enumerate}

\section{Numerical results}
\label{section_numerical}
\subsection{Optimization Performance for Penalized Quantile Regression}\label{subsection_opt}

As mentioned in the introduction, \texttt{quantreg}  is another publicly available
R package that supports lasso penalized quantile regression.
Since our implementation employs an approximation model, it does not give ``exact" solutions. Hence we want to compare its solutions with the ones computed by \texttt{quantreg} in terms of optimality.

Unlike \texttt{hqreg} that computes a solution path, \texttt{quantreg} computes a single solution for a given $\lambda$ value, and it does not support the general elastic-net penalty with $0 <\alpha < 1$. For comparison, we only consider lasso ($\alpha = 1$). We first computed a solution path along 100 $\lambda$ values using \texttt{hqreg} and then ran \texttt{quantreg} for each $\lambda$ value. Note that \texttt{quantreg} actually uses the formulation
\[
\min_{\beta_{0},\beta}\sum_{i=1}^{n}\rho_{\tau}(y_{i}-\beta_{0}-x_{i}^{\top}\beta)+\lambda\cdot \frac{1}{2}\sum_{j=1}^p|\beta_{j}|
\]
which does not have a $1/n$ scaling factor for the loss part and instead contains a  $1/2$ factor for the penalty. This is intended to treat the penalty terms as if median regression were performed on them ($\frac{1}{2}\lambda|\beta_j| = \rho_{0.5}(\lambda\beta_j)$). Due to this difference, for each $\lambda$ value used with \texttt{hqreg}, we equivalently supplied \texttt{quantreg} with $2n\lambda$. Also, while \texttt{hqreg} supports data preprocessing via the argument ``preprocess" with 3 options ``standardize", ``rescale" and ``none", \texttt{quantreg} does not provide such an option.
So we standardized the data beforehand for all the real datasets involved in this section and used the standardized ones for comparison. Consequently, we set \texttt{preprocess = "none"} when calling \texttt{hqreg}. For \texttt{quantreg}, the latest version 5.24 was used.

Let $f_Q(\cdot;\lambda)$ denote the objective function as in \eqref{quantile_enet}, and let $\widehat{\beta}_{\text{hqreg}}$ and $\widehat{\beta}_{\text{quantreg}}$ be the solutions given by the two packages, respectively. For $\alpha = 1$ the model is not strictly convex, so in general it does not have a unique optimizer. Hence the values of the two solutions may not be very close. Instead, a reasonable approach is to compare the values of the objective functions $f_Q(\widehat{\beta}_{\text{hqreg}})$ and $f_Q(\widehat{\beta}_{\text{quantreg}})$. Specifically, we made the comparisons based on the relative difference,
\begin{equation}
\label{Ddef}
D(\lambda) = \frac{f_{Q}(\widehat{\beta}_{\text{hqreg}};\lambda)-f_{Q}(\widehat{\beta}_{\text{quantreg}};\lambda)}{f_{Q}(\widehat{\beta}_{\text{quantreg}};\lambda)}.
\end{equation}

Two datasets were considered:
\begin{itemize}
\item
GDP \citep{koenker1999goodness}: consists of 161 observations on national GDP growth rates, recorded as ``Annual Change Per Capita GDP", and 13 covariates. The first 71 observations are from the period 1965-1975, and the rest from the period 1975-1985. This dataset is available in \texttt{quantreg} via \texttt{data(barro)}.

\item
Riboflavin \citep{buhlmann2014high}: gene-expression data for predicting log transformed riboflavin (vitamin B2) production rate in Bacillus subtilis. It contains 71 observations and 4088 features (gene expressions). This dataset is available in R package \texttt{hdi} via \texttt{data(riboflavin)}. For this task only 1000 features with the largest variances were used.
\end{itemize}

\begin{figure}[htbp]
\begin{center}
\includegraphics[scale=0.4]{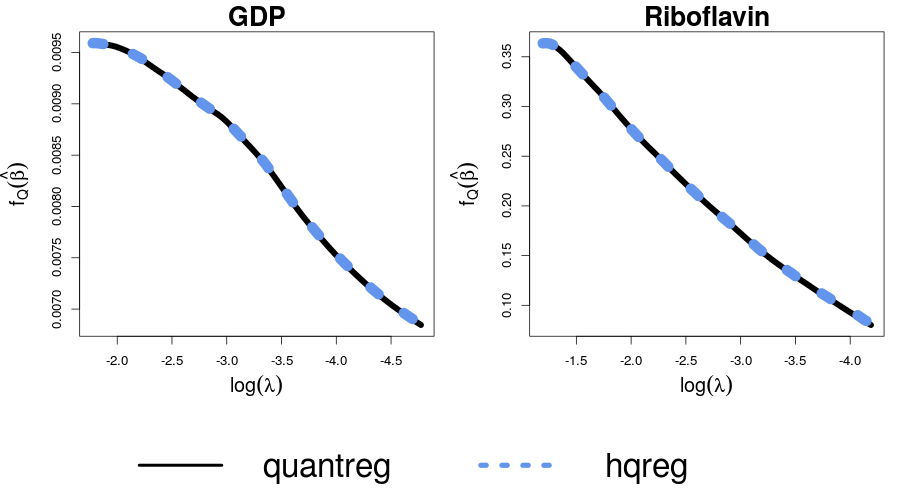}
\vspace{-7mm}
\caption{Values of objective functions with $\tau=0.5$ along the solution path for GDP and riboflavin datasets. Solid line: \texttt{quantreg}, dashed line: \texttt{hqreg}.}
\label{plot_objective_quant}
\end{center}
\end{figure}

\begin{table}[htbp]
\spacingset{1}
\begin{center}
\begin{tabular}{c|c|rr}
\hline
Dataset & $\tau$ & $\min D(\lambda_i)$
& $\max D(\lambda_i)$ \\
\hline
\multirow{3}{*}{GDP} &0.25 & -2.1e-9 & 1.5e-3 \\
&0.50 & -1.3e-10 & 9.6e-4 \\
&0.75 & -3.0e-10 & 1.7e-3 \\
\hline
\multirow{3}{*}{Riboflavin} &0.25 & 6.5e-5 & 2.6e-2 \\
&0.50 & -3.6e-10 & 2.0e-2 \\
&0.75 & 8.8e-5 & 2.1e-2 \\
\hline
\end{tabular}
\caption{
The range of the relative differences $D(\lambda_i), 1 \le i \le 100,$ between \texttt{hqreg} and \texttt{quantreg}.}
\label{compare_optimality_quant}
\end{center}
\end{table}

\begin{table}[htbp]
\spacingset{1}
\begin{center}
\begin{tabular}{c|c|c|ccc}
\hline
\multirow{2}{*}{Dataset} & \multirow{2}{*}{$\tau$} & \multirow{2}{*}{\texttt{hqreg}} & \multicolumn{3}{c}{\texttt{quantreg}} \\
& &  & total & $\lambda_1$ & $\lambda_{100}$ \\
\hline
\multirow{3}{*}{GDP} &0.25 & 0.018 & 0.235 & 0.007 & 0.002\\
&0.50 & 0.018 & 0.223 & 0.002  & 0.003\\
&0.75 & 0.027 & 0.240 & 0.003 & 0.002\\
\hline
\multirow{3}{*}{Riboflavin} &0.25 & 2.501 & 538.2 & 3.630 & 4.958 \\
&0.50 & 3.026 & 531.6 & 4.591  & 4.984\\
&0.75 & 2.922 & 588.8 & 7.119 & 5.791\\
\hline
\end{tabular}
\caption{Running time (in seconds) for computing the solution paths}
\label{compare_time_quant}
\end{center}
\end{table}

Figure \ref{plot_objective_quant} displays the computed values of objective functions $f_Q(\widehat{\beta}_{\text{hqreg}})$ and $f_Q(\widehat{\beta}_{\text{quantreg}})$ for $\tau = 0.5$ along the solution path for both datasets. There is no visually detectable discrepancy between the two lines. Hence we also computed the range of $D(\lambda)$ in each case and the results are listed in Table \ref{compare_optimality_quant}. In each case, the range of $D(\lambda_i)$'s is extremely narrow and all values are very close to zero. This indicates the two packages indeed have similar performances.

We also report the running time in Table \ref{compare_time_quant}. The time for \texttt{hqreg} is for one call that fits the entire solution path, and the time for \texttt{quantreg} is the total of time recorded separately for each $\lambda$. For all these cases \texttt{hqreg} is significantly faster than \texttt{quantreg}, although it may not be quite fair for \texttt{quantreg} since it does not rely on warm-start. The timings taken for \texttt{quantreg} on $\lambda_1$ and $\lambda_{100}$ are also listed, which appear to be roughly the same. In the case of riboflavin data, the running time of \texttt{quantreg} on single $\lambda$ values is in fact longer than the time used by \texttt{hqreg} to compute the whole path.

To further investigate their performances in various other scenarios, we ran a large set of experiments on 10000 datasets, each generated with the following settings:
\begin{itemize}
\item
the number of observations $n$ and the number of features $p$ are randomly selected from the set $\{20,100,200,500,1000,2000,5000\}$.
\item
the number of nonzero coefficients is $q = \theta \min(n, p)$ where $\theta$ is uniformly sampled from $\{5\%, 10\%, 20\%, 30\%\}$ and the coefficients values are randomly selected from $\{\pm 1, \ldots, \pm 10 \}$.
\item
each feature vector $x_i$ is generated via $x_{ij} = z_{ij} + 0.5 u_i, \; 1\leq j \leq p$, where $z_{ij}, u_i$'s are i.i.d. standard gaussian, so that each pair of features has the same correlation 0.25.
\item
the outcome $y_i$'s are generated by $y_i = 10 + x_i^\top\beta+\veps_i$, where
$\veps_i$'s are iid sampled from Student's t distribution with $df = 4$.
\end{itemize}

For each dataset and each $\tau \in \{0.25, 0.5, 0.75\}$, we applied \texttt{hqreg} to compute an entire solution path and randomly selected an index $k$ out of $\{10, 20, \ldots, 100\}$, then ran \texttt{quantreg} on $\lambda_k$, the k-th $\lambda$ value for the solution path computed by \texttt{hqreg}. These experiments were performed in parallel via grid computing on a high performance cluster at the University of Iowa.

\begin{figure}[htbp]
\begin{center}
\includegraphics[scale=0.4]{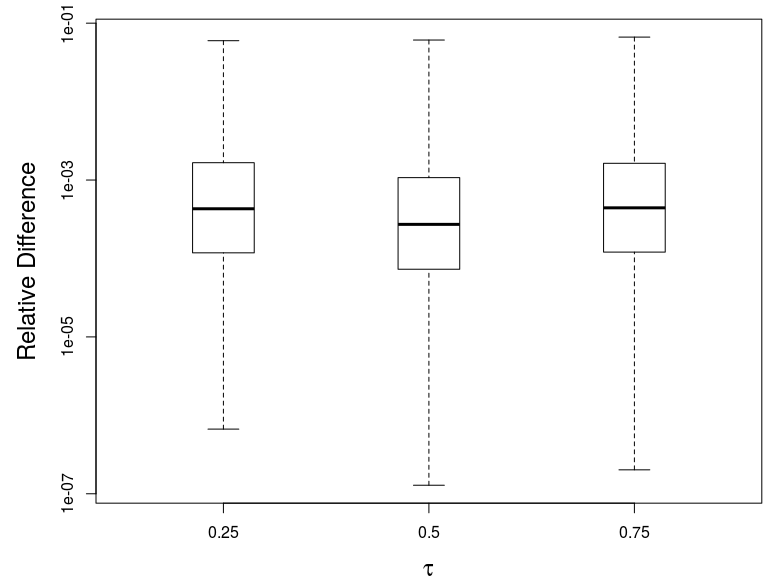}
\vspace{-5mm}
\caption{Boxplots of the relative difference $D$ on 10000 simulated datasets}
\label{box_objective}
\end{center}
\end{figure}

We calculate the relative difference $D(\lambda)$ for each pair of the solutions and summarize the results in three boxplots plotted on the logarithmic scale shown in Figure \ref{box_objective}. We observe that the values of $D$ have a narrow range between  1e-7 and 1e-1 with the majority falling below 1e-3, and are slightly smaller for $\tau = 0.5$. Besides, the distribution of $D$ appears roughly symmetric on the logarithmic scale in each case.


\subsection{Timing Performance}
\label{subsection_timings}

In addition to the Huber loss and the quantile loss, \texttt{hqreg} also supports the squared loss for the least squares which is not discussed in this paper, but its SNCD iterations can be derived in a similar way as the other two models. Here we consider their running time performances.

We generated Gaussian data with $n$ observations and $p$ features, where each pair of features have an identical correlation $\rho$. To simplify settings and highlight the timing comparison based on the key parameter $\gamma$ and $\tau$, we set $\rho = 0.25$ and $\alpha = 0.9$ for all cases. The responses were generated by
\[
Y=\sum_j X_{j}\beta_{j}+k\cdot E
\]
where $\beta_{j}=(-1)^{j}\exp(-(j-1)/10)$, $E\sim T(df=4)$ and $k$ is determined so that the signal-to-noise ratio is 3.

\subsubsection{Huber Loss Regression and Least Squares}
In this part, we compare the running time of competing methods for the elastic-net penalized Huber loss regression and least squares. For the Huber loss,  since there is no other algorithms, we consider only \texttt{hqreg} for SNCD with no variable screening (\texttt{hqreg}-NVS), SNCD with the adaptive strong rule (\texttt{hqreg}-ASR),
and our implementation of pure SNA.
In the experiments we considered 5 values of $\gamma$ ranging from 0.01 to 100.  On the other hand, for the least squares we compared {\texttt{hqreg}} with a state-of-the-art coordinate descent algorithm implemented by R package \texttt{glmnet}. For \texttt{glmnet}, the latest version 2.0-5 was used which employs the strong rule for variable screening. All methods considered here are R functions. \texttt{glmnet} does all its computation in Fortran, \texttt{hqreg} does the computation in C, and the SNA implementation is also programmed in C with matrix operations performed via BLAS and LAPACK.

\begin{table}[htbp]
\spacingset{1}
\begin{center}
\begin{tabular}{lrrrrr|c}
& \multicolumn{5}{c|}{Huber} & Least Squares \\
& \multicolumn{5}{c|}{$\gamma$} & \\
& 0.01 & 0.1 & 1 & 10 & 100 & \\
\cline{2-7}\\[0.01ex]
& \multicolumn{6}{c}{$n = 1000, p = 100$} \\
\cline{2-7}
hqreg-NVS &0.61 &0.09 &0.05 &0.04 &0.04 & 0.03\\
hqreg-ASR &0.33 &0.06 &0.03 &0.02 &0.02 & 0.02\\
glmnet & --- &--- &--- &--- &--- & 0.02 \\

\cline{2-7}\\[0.01ex]
& \multicolumn{6}{c}{$n = 5000, p = 100$} \\
\cline{2-7}
hqreg-NVS &2.46 &0.48 &0.24 &0.20 &0.21 & 0.14 \\
hqreg-ASR &1.32 &0.30 &0.15 &0.12 & 0.11 &0.07 \\
glmnet & --- &--- &--- &--- &---  & 0.02 \\
\cline{2-7}\\[0.01ex]
& \multicolumn{6}{c}{$n = 100, p = 1000$} \\
\cline{2-7}
hqreg-NVS & 1.89 &0.39 &0.09 &0.08 &0.08 &0.05 \\
hqreg-ASR &0.52 &0.11 &0.03 &0.02 &0.02 & 0.02 \\
glmnet & --- &--- &--- &--- &---  & 0.02 \\
\cline{2-7}\\[0.01ex]
& \multicolumn{6}{c}{$n = 100, p = 5000$} \\
\cline{2-7}
hqreg-NVS &8.70 &2.09 &0.46 &0.38 &0.38 & 0.29\\
hqreg-ASR &0.85 &0.23 &0.09 &0.11 &0.08 & 0.07 \\
glmnet & --- &--- &--- &--- &---  & 0.08 \\
\cline{2-7}\\[0.01ex]
& \multicolumn{6}{c}{$n = 100, p = 20000$} \\
\cline{2-7}
hqreg-NVS & 30.27 & 8.88 &2.43 & 2.45 & 2.40 & 1.23\\
hqreg-ASR &1.60 &0.54 &0.43 &0.31 &0.30 & 0.32 \\
glmnet & --- &--- &--- &--- &--- & 0.30 \\
\cline{2-7}\\[0.01ex]
& \multicolumn{6}{c}{$n = 100, p = 100000$} \\
\cline{2-7}
hqreg-NVS &175.81 &45.33 & 11.23 & 11.49 & 11.41  & 5.94 \\
hqreg-ASR &4.50 &2.12 &1.69 &1.58 & 1.53 & 1.57 \\
glmnet & --- &--- &--- &--- &--- & 1.39 \\
\cline{2-7}
\end{tabular}
\caption{Running time (in seconds) for computing regularization paths for the elastic-net penalized Huber loss regression and least squares regression. Total time for \texttt{100} $\lambda$ values, averaged over 3 runs. 
}
\label{timings}
\end{center}
\end{table}

We have found in practice that convergence of SNA has much higher reliance on initial points than SNCD, and it can fail if the $\lambda$ sequence is not dense enough. Hence we divided the experiments into two parts. In the first part for the Huber loss and the least squares together, we left out SNA and computed each solution path with the usual number of 100 $\lambda$ values. In the second part, we compared only SNA and SNCD(\texttt{hqreg}-NVS) for the Huber loss on dense lambda sequences each consisting of 10000 values.

Table \ref{timings} shows average CPU timings for the first part. First compare the timings for the Huber loss. Across different values of $\gamma$, we observe that for both versions the timings increase when $\gamma$ is nearing 0, and stay almost the same for $\gamma \geq 1$. And clearly \texttt{hqreg}-ASR that employs the adaptive strong rule is much faster and more scalable than \texttt{hqreg}-NVS that has no variable screening. For the least squares, \texttt{hqreg}-ASR and \texttt{glmnet} have similar performances except the case with $n = 5000$. Besides, we discover that the timings for the Huber loss regression with $\gamma \geq 1$ are very close to those of the least squares. Considering that the Huber loss is more difficult to handle than the simple squared loss, the performance of \texttt{hqreg} is very impressive.

Table \ref{timings_small} shows average CPU timings for the second part. 
We observe that while SNCD converges in every case, SNA fails in the cases with large $p$ and $\gamma = 0.1$. When $p$ is small, SNCD does not have much advantage. But when $p$ increases, SNCD becomes considerably faster with an increasing speedup relative to SNA. These results show that SNCD is more stable and scalable than SNA.

\begin{table}[htbp]
\spacingset{1}
\begin{center}
\begin{tabular}{lrrr}
& \multicolumn{3}{c}{$\gamma$} \\
& 0.1 & 1 & 10 \\
\cline{2-4}\\[0.01ex]
& \multicolumn{3}{c}{$n = 1000, p = 100$} \\
\cline{2-4}
SNA & 3.98 & 5.16 & 5.44 \\
SNCD & 1.89 & 1.27 & 1.19 \\
\cline{2-4}\\[0.01ex]
& \multicolumn{3}{c}{$n = 5000, p = 100$} \\
\cline{2-4}
SNA & 17.98 & 24.42 & 26.33 \\
SNCD & 12.38 & 6.84 & 6.31 \\
\cline{2-4}\\[0.01ex]
& \multicolumn{3}{c}{$n = 100, p = 1000$} \\
\cline{2-4}
SNA & $\times$ & 11.70 & 10.47 \\
SNCD & 2.24 & 1.62 & 1.51 \\
\cline{2-4}\\[0.01ex]
& \multicolumn{3}{c}{$n= 100, p = 5000$} \\
\cline{2-4}
SNA & $\times$ & 98.66 & 100.76 \\
SNCD & 9.87 & 8.19 & 8.96 \\
\cline{2-4}\\[0.01ex]
\end{tabular}
\vspace{-5mm}
\caption{Running time (in seconds) for comparing SNCD(\texttt{hqreg}-NVS) and SNA on the penalized Huber loss regression. ``$\times$" represents early exit due to divergence at some $\lambda$ value. Total time for \texttt{10000} $\lambda$ values. }
\label{timings_small}
\end{center}
\end{table}

\subsubsection{Quantile Regression}

\texttt{hqreg} is faster than \texttt{quantreg} for the examples in section \ref{subsection_opt}. However, \texttt{quantreg} does not implement pathwise optimization and rely on warm-start like \texttt{hqreg} does. Instead, for each supplied $\lambda$ value it has to solve the corresponding problem individually ``from scratch". So it is not quite reasonable to compare \texttt{quantreg} with \texttt{hqreg} for computing the whole solution path. For this part, we compare only \texttt{hqreg}-NVS and \texttt{hqreg}-ASR. As shown in Table \ref{timings_quantile}, \texttt{hqreg}-ASR is similar to \texttt{hqreg}-NVS in cases with $p=100$ but considerably faster when $p$ gets larger. \texttt{hqreg}-ASR also shows much better scalability with the dimension $p$.

\begin{table}[htb]
\spacingset{1}
\begin{center}
\begin{tabular}{lrrr}
& \multicolumn{3}{c}{$\tau$} \\
& 0.25 & 0.50 & 0.75 \\
\cline{2-4}\\[0.01ex]
& \multicolumn{3}{c}{$n = 1000, p = 100$} \\
\cline{2-4}
hqreg-NVS & 0.21 & 0.18 & 0.19 \\
hqreg-ASR & 0.13 & 0.10 & 0.11 \\
\cline{2-4}\\[0.01ex]
& \multicolumn{3}{c}{$n = 5000, p = 100$} \\
\cline{2-4}
hqreg-NVS & 0.56 & 0.58 & 0.54 \\
hqreg-ASR & 0.38 & 0.42 & 0.33 \\
\cline{2-4}\\[0.01ex]
& \multicolumn{3}{c}{$n = 100, p = 1000$} \\
\cline{2-4}
hqreg-NVS & 10.77 & 7.37 & 11.90 \\
hqreg-ASR & 2.98 & 1.94 & 2.92 \\
\cline{2-4}\\[0.01ex]
& \multicolumn{3}{c}{$n = 100, p = 5000$} \\
\cline{2-4}
hqreg-NVS & 47.08 & 41.46 & 58.97 \\
hqreg-ASR & 3.33 & 2.92 & 4.23 \\
\cline{2-4}\\[0.01ex]
& \multicolumn{3}{c}{$n = 100, p = 100000$} \\
\cline{2-4}
hqreg-ASR & 19.28 & 12.43 & 22.98 \\
\cline{2-4}\\[0.01ex]
\end{tabular}
\vspace{-5mm}
\caption{Running time (in seconds) for computing regularization paths for penalized quantile regression. Total time for \texttt{100} $\lambda$ values, averaged over 3 runs.}
\label{timings_quantile}
\end{center}
\end{table}

\subsection{Real Data Example}
We now compare the modelling performance of penalized Huber loss regression, quantile regression and least squares via an empirical analysis on a real dataset. It is a breast cancer gene expressions dataset that comes from the Cancer Genome Atlas (2012) project (\url{http://cancergenome.nih.gov/}), obtained using Agilent mRNA expression microarrays. It contains expression measurements of 17814 genes on 536 patients, including BRCA1, the first gene identified to be associated with increasing risk of early onset breast cancer. Hence we regress the key gene BRCA1 on the other genes to detect potential interconnections. Before fitting the models, we carried out the following two screening steps: remove any gene for which the range of the expression among all patients is less than 2, and remove any gene for which the sample correlation with BRCA1 is less than 0.05. After the screening, there are 11562 genes left.
Then we consider 7 elastic-net penalized linear regression models using these genes as predictors: the least squares (LS-Enet); 3 Huber loss regression models with values of $\gamma$ being $\text{IQR}(y)$, $\text{IQR}(y)/2$, $\text{IQR}(y)/10$ respectively where $\text{IQR}(y) = 0.93$, denoted as H-Enet($\gamma = \text{IQR}(y)$), H-Enet($\gamma = \text{IQR}(y)/2$), and (H-Enet($\gamma = \text{IQR}(y)/10$); 3 quantile regression models with $\tau = 0.25, 0.50, 0.75$, denoted as Q-Enet($\tau = 0.25$), Q-Enet($\tau = 0.50$), and Q-Enet($\tau = 0.75$). $\alpha = 0.9$ is applied to the elastic-net penalty for all models.

We conduct 50 random partitions. For each partition, we randomly select 300 patients as the training data and the other 236 as the testing data. A five-fold cross validation is applied to the training data to select the tuning parameter $\lambda$. For prediction on the testing set, we consider two error measures. The first one is the commonly used mean absolute prediction error (MAPE). Since MAPE is not sensitive to heterogeneity and may not provide accurate assessment for Q-Enet($\tau = 0.25$) and Q-Enet($\tau = 0.75$) which use asymmetric losses, we also consider using the quantile loss $\rho_\tau$ to measure prediction performance as suggested in \cite{wang2012quantile}. With $\rho_\tau$ for corresponding quantile regression models and $\rho_{0.5}$ for the least squares and the Huber loss regression models, we define quantile-based prediction error (QPE) as $\sum_{i} \rho_\tau(y_i - \widehat{y}_i)/n$.

\begin{table}[htbp]
\spacingset{1}
\begin{center}
\begin{tabular}{lrrr}
\multicolumn{1}{c}{Model} & \multicolumn{1}{c}{Ave \# nonzero} & \multicolumn{1}{c}{Ave MAPE} & \multicolumn{1}{c}{Ave QPE}  \\
\hline
LS-Enet & 114.30 (36.99) & 0.335 (0.018) & 0.167 (0.009) \\
H-Enet($\gamma = \text{IQR}(y)$) & 100.14 (44.70) & 0.331 (0.018) & 0.166 (0.009) \\
H-Enet($\gamma = \text{IQR}(y)/2$) & 82.06 (30.40) &
0.310 (0.020) & 0.155 (0.010) \\
H-Enet($\gamma = \text{IQR}(y)/10$) & 114.08 (30.73) & 0.293 (0.021) & 0.146 (0.010) \\
Q-Enet($\tau = 0.25$) & 94.58 (41.60) & 0.373 (0.026) & 0.151 (0.010) \\
Q-Enet($\tau = 0.50$) & 152.90 (51.96) & 0.294 (0.021) & 0.147 (0.012) \\
Q-Enet($\tau = 0.75$) & 104.90 (27.96) & 0.317 (0.027) & 0.109 (0.007) \\
\hline
\end{tabular}
\caption{Analysis of the microarray dataset}
\label{bcdata_comparison}
\end{center}
\end{table}

In Table \ref{bcdata_comparison} we report the average number of nonzero regression coefficients, the average MAPE and the average QPE, where numbers in the parentheses are the corresponding standard errors  across the 50 partitions. The standard errors of the estimated numbers of nonzero coefficients are large relative to the averages, showing that all models are affected by noise to some extent. However, the standard errors for MAPE and QPE are relatively small, which indicates the prediction performances are stable. Among all models, H-Enet($\gamma = \text{IQR}(y)/10$) and Q-Enet($\tau = 0.50$) have the best performances in terms of MAPE, and Q-Enet($\tau = 0.75$) dominates QPE, while LS-Enet performs poorly under both criteria. Q-Enet($\tau = 0.75$) seems the best overall and it also tends to select sparser models compared to the aforementioned H-Enet($\gamma = \text{IQR}(y)/10$), Q-Enet($\tau = 0.50$) or LS-Enet.

For each model, different partitions may lead to different selection results.
We select LS-Enet, H-Enet($\gamma = \text{IQR}(y)/10$) and Q-Enet($\tau = 0.75$) to represent their own classes, and report the names and the frequencies of top genes selected (over 40 times) in Table \ref{freq_bcdata} where the genes are ordered alphabetically. We observe that some genes such as DTL, NBR2, PSME3, RPL27 have high frequencies with all three models, while genes such as KHDRBS1 do not. Overall, H-Enet($\gamma = \text{IQR}(y)/10$) and Q-Enet($\tau = 0.75$) select more genes with high frequencies than LS-Enet while their model sizes are smaller on average, especially Q-Enet($\tau = 0.75$). It indicates these two models more consistently capture the important genes.

\begin{table}[htbp]
\spacingset{1}
\begin{center}
\begin{tabular}{lclclc}
\multicolumn{2}{c}{LS-Enet} & \multicolumn{2}{c}{H-Enet($\gamma = \text{IQR}(y)/10$)} & \multicolumn{2}{c}{Q-Enet($\tau = 0.75$)} \\
\hline
Gene & \multicolumn{1}{c}{Frequency} &
Gene & \multicolumn{1}{c}{Frequency} &
Gene & \multicolumn{1}{c}{Frequency} \\
\hline
DTL & 45 & C17orf53 & 46 & C17orf53 & 48 \\
KHDRBS1 & 41 & CENPQ & 42 & CENPM & 45 \\
NBR2 & 50 & DTL & 46 & DTL & 44 \\
PSME3 & 45 & MCM6 & 50 & GCN5L2 & 44 \\
RPL27 & 45 & NBR1 & 47 & KIAA0101 & 40 \\
VPS25 & 43 & NBR2 & 50 & MCM6 & 42 \\
& & NMT1 & 41 & NBR1 & 49 \\
& & PSME3 & 50 & NBR2 & 50 \\
& & RPL27 & 41 & PSME3 & 50 \\
& & & & RPL27 & 50 \\
& & & & SUZ12 & 40 \\
& & & & SYNGR4 & 41 \\
& & & & XRCC2 & 41 \\
\hline
\end{tabular}
\caption{Genes selected with high frequency for the microarray dataset}
\label{freq_bcdata}
\end{center}
\end{table}

\section{Discussions}
\label{section_discuss}
The Huber loss regression and the quantile regression have important applications in many fields.
However, there is a lack of efficient algorithms and publicly available software that can fit these models in high-dimensional settings. In this paper, we develop an efficient and scalable algorithm for computing the solution paths for these models
with the elastic-net penalty. We also provide an implementation via the R package \texttt{hqreg} publicly available on CRAN (\url{http://cloud.r-project.org/package=hqreg}).

\appendix
\section*{Appendices}
\renewcommand{\thesubsection}{\Alph{subsection}}
\counterwithin{theorem}{subsection}
\counterwithin{equation}{subsection}

\subsection{Background on Convex Analysis and Properties of Newton Derivative}
To derive the KKT conditions \eqref{KKT1_reformulated}, we recall some background in convex analysis. We also describe some useful properties of Newton derivative.

For a convex function $f$, a vector $w$ is called a \emph{subgradient} of $f$ at $z$ if 
\begin{equation}
f(x)-f(z)\geq w^{\top}(x-z),\quad\forall x.
\end{equation}
The set of all subgradients of $f$ at $z$ is called the \emph{subdifferential}, denoted as $\partial f(z)$. For example, the subdifferential of the absolute value function has the following form

\begin{equation}
\partial|z|=\begin{cases}
\{\text{sign}(z)\} & \text{if } z \neq 0, \\
[-1,1] & \text{if } z=0.
\end{cases}
\end{equation}

For convex optimization problems, the necessary and sufficient optimality conditions are called the KKT conditions. In the case of unconstrained optimization, the KKT conditions can be stated in terms of Fermat's rule \citep{rockafellar1970convex}: for a convex function $f$,
\begin{equation}\label{Fermat}
\mathbf{0} \in\partial f(z^{*})\Leftrightarrow z^{*}=\arg\underset{z}{\min}f(z).
\end{equation}

This holds because by definition $\mathbf{0} \in\partial f(z^{*})$ if and only if $f(z)-f(z^*)\geq \mathbf{0}^\top (z-z^*)=0$ for every $z$, that is, $z^{*}=\arg\underset{z}{\min}f(z)$.

A more general result \citep{combettes2005signal} is
\begin{equation}\label{genFermat}
w\in\partial f(z)\Leftrightarrow z=\text{Prox}_{f}(z+w),
\end{equation}
where $\text{Prox}_f$ is the \emph{proximity operator} for $f$ defined as
\[
\text{Prox}_{f}(z)\coloneqq \arg\underset{x}{\min}\frac{1}{2}\|x-z\|_2^{2}+f(x).
\]
The second statement can be shown as follows. Applying Fermat's rule,
\[
z=\text{Prox}_{f}(z+w)=\arg\underset{x}{\min}\frac{1}{2}\|x-z-w\|_2^{2}+f(x),
\]
if and only if there exists $s \in \partial f(x)$ such that
\[
0=(z-z-w)+s=-w+s,
\]
that is,
\[
w=s\in\partial f(x).
\]

It can shown that the proximity operator of the absolute value $|\cdot|$ is given in closed form by the soft-thresholding operator with threshold 1, i.e.

\begin{equation}
\text{Prox}_{|\cdot|}(z)= S(z)=\text{sgn}(z)(|z|-1)_{+}.
\end{equation}

Then it follows from \eqref{genFermat} that $s_j \in \partial|\beta_j|$ can be expressed as an equation
\begin{equation}\label{subgradient_eq}
\beta_{j}-S(\beta_{j}+s_{j})=0.
\end{equation}


According to the Fermat's rule \eqref{Fermat}, the KKT conditions for the penalized Huber loss regression \eqref{huber} are

\begin{equation}
\begin{cases}
-\frac{1}{n}\sum_{i}h^\prime_{\gamma}(y_{i}-\widehat{\beta}_{0}-x_{i}^{\top}\widehat{\beta}) = 0, \\
-\frac{1}{n}\sum_{i}h^\prime_{\gamma}(y_{i}-\widehat{\beta}_{0}-x_{i}^{\top}\widehat{\beta})x_{ij}+\lambda\alpha \widehat{s}_{j}+\lambda(1-\alpha)\widehat{\beta}_{j} = 0, \\
\widehat{s}_j \in \partial|\widehat{\beta}_j|, \quad j=1,\ldots, p,
\end{cases}
\end{equation}
where $(\widehat{\beta}_0, \widehat{\beta})$ is an optimizer. Rewriting the last row by \eqref{subgradient_eq}, we obtain the KKT conditions as a system of equations \eqref{KKT1_reformulated}.

The definition of ``Newton derivative" is already given in the main text. Now we provide several properties useful for calculating Newton derivatives. The first one is the following chain rule for Newton derivatives \citep{ito2008lagrange}.

\begin{lemma}\label{chain}
If $F:\mathbb{R}^{l}\rightarrow\mathbb{R}^{m}$ is continuously Fr\'echet differentiable at $z\in \mathbb{R}^l$ with Jacobian $J_F$ and $G:\mathbb{R}^{m}\rightarrow\mathbb{R}^{n}$ is Newton differentiable at $F(z)$ with a Newton derivative $H_G$. Then $G\circ F$ is Newton differentiable at $z$ with a Newton derivative $H_G(F(z+h))J_F(z+h)$ for $h$ sufficiently small.
\end{lemma}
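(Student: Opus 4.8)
The plan is to verify directly the two requirements of Definition \ref{slant} for the candidate map $H_{G\circ F}(z+h) := H_G(F(z+h))\,J_F(z+h)$: first that it is uniformly bounded in spectral norm for $h$ in a small neighborhood of $0$, and second that it furnishes the required first-order estimate. Boundedness is the easy half. Since $F$ is continuously Fr\'echet differentiable at $z$, the Jacobian $J_F$ is continuous there and hence bounded on a neighborhood; since $F$ is in particular continuous, $F(z+h)$ lies near $F(z)$ for small $h$, so the uniform boundedness of $H_G$ near $F(z)$ (part of the hypothesis that $G$ is Newton differentiable at $F(z)$) bounds $H_G(F(z+h))$. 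The product of two uniformly bounded families is uniformly bounded.

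For the estimate I would set $w := F(z+h) - F(z)$ and split
\begin{align*}
& (G\circ F)(z+h) - (G\circ F)(z) - H_G(F(z+h)) J_F(z+h) h \\
&\quad = \big[\,G(F(z)+w) - G(F(z)) - H_G(F(z)+w)\,w\,\big] + H_G(F(z+h))\big[\,w - J_F(z+h)h\,\big],
\end{align*}
using $F(z+h) = F(z)+w$ and $H_G(F(z+h)) = H_G(F(z)+w)$. The first bracket is exactly the Newton-difference of $G$ at $F(z)$ evaluated at the increment $w$, hence $o(\|w\|_2)$ by the Newton differentiability of $G$ (note $w\to 0$ as $h\to 0$ by continuity of $F$). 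Because $F$ is $C^1$ it is locally Lipschitz at $z$, so $\|w\|_2 \leq L\|h\|_2$ for small $h$, which upgrades $o(\|w\|_2)$ to $o(\|h\|_2)$; the degenerate case $w=0$ makes the first bracket vanish outright.

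The second bracket is where the $C^1$ hypothesis does the real work. I would estimate $w - J_F(z+h)h = F(z+h)-F(z) - J_F(z+h)h$ by inserting $J_F(z)h$: Fr\'echet differentiability gives $F(z+h)-F(z)-J_F(z)h = o(\|h\|_2)$, while continuity of $J_F$ at $z$ gives $\|[J_F(z)-J_F(z+h)]h\|_2 \le \|J_F(z)-J_F(z+h)\|\,\|h\|_2 = o(\|h\|_2)$. Adding these, $w - J_F(z+h)h = o(\|h\|_2)$, and multiplying by the uniformly bounded matrix $H_G(F(z+h))$ keeps the product $o(\|h\|_2)$. Summing the two brackets yields the required $o(\|h\|_2)$ estimate and completes the verification.

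The main obstacle, and really the only delicate point, is that the asserted Newton derivative is evaluated at the shifted argument, using $J_F(z+h)$ rather than $J_F(z)$. This is precisely why continuity of the Jacobian (the full $C^1$ assumption, not mere differentiability at the single point $z$) is needed: it is exactly what lets me trade $J_F(z+h)$ for $J_F(z)$ up to an $o(\|h\|_2)$ error in the second bracket. The remaining bookkeeping---turning $o(\|w\|_2)$ into $o(\|h\|_2)$ via local Lipschitz continuity, and propagating $o(\|h\|_2)$ through a bounded linear map---is routine.
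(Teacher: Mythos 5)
Your proof is correct. There is, however, nothing in the paper to compare it against line by line: the paper does not prove Lemma \ref{chain} but simply quotes it with a citation to \cite{ito2008lagrange}, and its appendix explicitly lists only Lemmas \ref{basic}, \ref{piecewise} and Theorems \ref{convergence_lad}, \ref{convergence_ls}, \ref{slant_huber} among the results proved there. So your write-up supplies a self-contained verification of a fact the paper delegates to the literature. Your decomposition---splitting the Newton difference of $G\circ F$ into the Newton difference of $G$ at $F(z)$ with increment $w = F(z+h)-F(z)$, plus the term $H_G(F(z+h))\bigl[w - J_F(z+h)h\bigr]$---is the natural one, and both estimates are handled properly: the upgrade from $o(\|w\|_2)$ to $o(\|h\|_2)$ via the pointwise Lipschitz bound $\|w\|_2 \le L\|h\|_2$ (with the degenerate case $w=0$ disposed of separately), and the exchange of $J_F(z+h)$ for $J_F(z)$ using continuity of the Jacobian at $z$. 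It is worth noting that this second estimate is precisely the computation the paper \emph{does} carry out in its proof of Lemma \ref{basic}(i), where $J_F(z+h)$ is shown to be a Newton derivative of a continuously differentiable map; your argument can be read as composing that fact with the uniform boundedness of $H_G$ near $F(z)$, which is how the chain rule is usually obtained.

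One small point you could make explicit: Definition \ref{slant} only bounds $\{H_G(y+k): k\neq 0\}$ on a punctured neighborhood of $y=F(z)$, so when $F(z+h)=F(z)$ for some $h\neq 0$ the hypothesis does not directly cover $H_G(F(z+h))$. This is harmless---$H_G(F(z))$ is a single fixed matrix with finite norm, so the family $\{H_G(F(z+h))J_F(z+h): h\neq 0 \text{ small}\}$ remains uniformly bounded and your propagation of $o(\|h\|_2)$ through it is safe---but since your boundedness step and your second bracket both invoke this family, a sentence acknowledging the case costs nothing and closes the argument completely.
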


We also provide two other results.

\begin{lemma}\label{basic}
In the following, assume $F:\mathbb{R}^{m}\rightarrow\mathbb{R}^{l}$, $G:\mathbb{R}^{m}\rightarrow\mathbb{R}^{l}$, $z\in\mathbb{R}^{m}$, $F=(F_{1},\ldots,F_{l})^{\top}$ and $H = (H_1^\top, \ldots, H_l^\top)^\top$, where $H_i \in \mathbb{R}^{1\times m}, \quad i = 1,\ldots,l$.
~\begin{enumerate}[(i)]
\item
If $F$ is continuously Fr\'echet differentiable at $z$, then $F$ is also Newton differentiable at $z$ and $ J_F \in\nabla_N F(z)$;
\item
If $F$ is Newton differentiable at $z$, then for any integer $k>0$ and $A\in\mathbb{R}^{k\times l}$, $AF$ is Newton differentiable at $z$; if $H\in\nabla_NF(z)$, then $AH\in\nabla_NAF(z)$;
\item
If $F$ and $G$ are Newton differentiable at $z$, then $F+G$ is Newton differentiable at $z$; if $H_F\in\nabla_NF(z)$, $H_G\in \nabla_NG(z)$, then $H_{F}+H_{G}\in\nabla_N(F+G)(z)$;
\item
$F$ is Newton differentiable at $z$ if and only if $F_1,\ldots,F_l$ are all Newton differentiable at $z$ and $H\in\nabla_NF(z)\Leftrightarrow H_{i}\in\nabla_NF_{i}(z),\quad i=1,\ldots,l$;
\end{enumerate}
\end{lemma}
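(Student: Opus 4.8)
The plan is to verify all four assertions directly from Definition \ref{slant}, establishing (i)--(iii) as independent building blocks and then deducing (iv) from (ii), since the two directions of (iv) amount to extracting a single component and to reassembling all of them. Throughout, the recipe is the same: exhibit an explicit candidate for the Newton derivative, check that it is uniformly bounded on a neighborhood, and bound the remainder by $o(\|h\|_2)$.

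For (i) I would take the candidate mapping to be $H(z+h)=J_F(z+h)$ itself. Uniform boundedness on a small neighborhood is immediate, since a continuous matrix-valued map is locally bounded. The substance is the remainder estimate. Writing the increment in integral form,
\[
F(z+h)-F(z)=\int_0^1 J_F(z+th)h\,dt,
\]
and subtracting $J_F(z+h)h$ gives
\[
F(z+h)-F(z)-J_F(z+h)h=\int_0^1\bigl(J_F(z+th)-J_F(z+h)\bigr)h\,dt.
\]
Since $\|(z+th)-z\|_2=t\|h\|_2\le\|h\|_2$ and $\|(z+h)-z\|_2=\|h\|_2$, continuity of $J_F$ at $z$ forces $\|J_F(z+th)-J_F(z+h)\|$ below any prescribed $\veps>0$ once $\|h\|_2$ is small, uniformly in $t\in[0,1]$; bounding the integrand by $\veps\|h\|_2$ then yields the required $o(\|h\|_2)$ estimate, so $J_F\in\nabla_N F(z)$. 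This is the step I expect to demand the most care, because it is the only place where the Fr\'echet (rather than merely Newton) structure is used and where uniform continuity of $J_F$ near $z$ must be invoked.

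Parts (ii) and (iii) should then be routine consequences of the linearity of the defining estimate. For (ii) I would set $H_{AF}(z+h)=AH(z+h)$; submultiplicativity of the spectral norm preserves uniform boundedness, and
\[
\|AF(z+h)-AF(z)-AH(z+h)h\|_2\le\|A\|\,\|F(z+h)-F(z)-H(z+h)h\|_2=o(\|h\|_2).
\]
For (iii) I would take $H_{F+G}=H_F+H_G$ and use the triangle inequality both to preserve uniform boundedness and to split the remainder into the two individual $o(\|h\|_2)$ contributions.

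Finally, for (iv) the forward implication follows by applying (ii) with $A=e_i^\top$, the $i$-th row selector, which shows each $F_i=e_i^\top F$ is Newton differentiable with Newton derivative $H_i=e_i^\top H$. For the converse, given Newton derivatives $H_i$ of the $F_i$, I would stack them as $H=(H_1^\top,\ldots,H_l^\top)^\top$; uniform boundedness of $H$ follows from that of its rows, and since the Euclidean norm decomposes as
\[
\|F(z+h)-F(z)-H(z+h)h\|_2^2=\sum_{i=1}^l\bigl|F_i(z+h)-F_i(z)-H_i(z+h)h\bigr|^2,
\]
a finite sum of $o(\|h\|_2^2)$ terms is $o(\|h\|_2^2)$, giving the claim after taking square roots. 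The stated equivalence of the Newton-derivative sets is then read off from the componentwise identity $H_i=e_i^\top H$.
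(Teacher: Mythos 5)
Your proposal is correct, and parts (ii), (iii) and (iv) coincide with the paper's own proof: the same multiplication bound $\|AF(z+h)-AF(z)-AH(z+h)h\|_2\le\|A\|\,\|F(z+h)-F(z)-H(z+h)h\|_2$, the same triangle-inequality splitting for sums, and the same norm identity $\|F(z+h)-F(z)-H(z+h)h\|_{2}^{2}=\sum_{i=1}^{l}\bigl(F_{i}(z+h)-F_{i}(z)-H_{i}(z+h)h\bigr)^{2}$ for the componentwise statement (your extra step of deriving the forward direction of (iv) from (ii) with row selectors $e_i^\top$ is a harmless refinement of what the paper leaves implicit). The only genuine divergence is in (i): the paper argues directly from the definition, inserting and removing $J_F(z)h$ so that
\[
\|F(z+h)-F(z)-J_{F}(z+h)h\|_{2}\leq\|F(z+h)-F(z)-J_{F}(z)h\|_{2}+\|J_{F}(z)-J_{F}(z+h)\|\,\|h\|_{2},
\]
where the first term is $o(\|h\|_2)$ by Fr\'echet differentiability at $z$ and the second by continuity of $J_F$ at $z$; you instead use the integral representation $F(z+h)-F(z)=\int_0^1 J_F(z+th)h\,dt$ and bound the integrand uniformly in $t$. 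Both are valid, but note what each buys: the paper's argument uses nothing beyond the pointwise definition of the derivative at $z$ and continuity of $J_F$ there, whereas your integral form tacitly requires $F$ to be differentiable along the whole segment and the map $t\mapsto J_F(z+th)h$ to be integrable --- unproblematic under the standard reading of ``continuously Fr\'echet differentiable at $z$'' as $C^1$ on a neighborhood, but strictly more machinery than the statement demands. You are right that (i) is where the care is needed; the paper's route simply shows that the needed care is smaller than your fundamental-theorem-of-calculus detour suggests.
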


\begin{lemma}\label{piecewise}
A univariate piecewise-smooth real function $f$ is everywhere Newton differentiable, with a Newton derivative $H$ given by
\[
H(z)=\begin{cases}
f'(z) & \text{if } f\text{ is differentiable at }z, \\
r_{z}\in\mathbb{R}^{1} & \text{if } f\text{ is not differentiable at }z.
\end{cases}
\]
\end{lemma}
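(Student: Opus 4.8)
The plan is to verify Definition \ref{slant} directly at an arbitrary point $z_0\in\mathbb{R}$, treating separately the case in which $z_0$ lies in the interior of a smooth piece and the case in which $z_0$ is a breakpoint. Throughout I read ``piecewise-smooth'' to mean that $f$ is continuous and that its breakpoints form a locally finite set, with $f$ continuously differentiable on the appropriate one-sided closure of each piece. Continuity is no loss of generality here: if $f$ had a jump at $z_0$ then $f(z_0+h)-f(z_0)$ would not tend to $0$, so no bounded $H$ could satisfy the $o(|h|)$ estimate, and Newton differentiability already forces continuity at $z_0$.

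First I would fix $z_0$ and choose $\delta>0$ so small that $(z_0-\delta,z_0+\delta)$ contains no breakpoint other than possibly $z_0$ itself, taking $\mathcal{N}(z_0)=(z_0-\delta,z_0+\delta)$. The key structural observation—and the reason the prescribed values $r_z$ at breakpoints are allowed to be arbitrary—is that both the defining estimate and the boundedness requirement in Definition \ref{slant} range only over $h\neq 0$. For $0<|h|<\delta$ the point $z_0+h$ is never a breakpoint, so $f$ is differentiable there and $H(z_0+h)=f'(z_0+h)$, independently of any of the values $r_z$. Hence the entire verification at $z_0$ uses only genuine derivative values at nearby points, and uniform boundedness of $\{H(z_0+h):0<|h|<\delta\}$ follows (after possibly shrinking $\delta$) because $f'$ is continuous on each open half-interval and has finite one-sided limits at $z_0$, hence is bounded on the punctured neighborhood.

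Next I would establish the estimate by cases. When $z_0$ is a point of differentiability, $f$ is $C^1$ on all of $\mathcal{N}(z_0)$, so it is Newton differentiable there with Jacobian $f'$ by Lemma \ref{basic}(i), which coincides with the prescribed $H(z_0+h)=f'(z_0+h)$. When $z_0$ is a breakpoint, I would handle $h>0$ and $h<0$ symmetrically; for $h>0$ the fundamental theorem of calculus on the right piece (valid since $f$ is continuous at $z_0$ and $C^1$ on $(z_0,z_0+\delta)$ with $f'$ extending continuously up to $z_0$) gives
\[
f(z_0+h)-f(z_0)-f'(z_0+h)\,h=\int_{z_0}^{z_0+h}\bigl(f'(t)-f'(z_0+h)\bigr)\,dt,
\]
whose absolute value is at most $h\cdot\sup_{t\in[z_0,z_0+h]}\lvert f'(t)-f'(z_0+h)\rvert$. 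By uniform continuity of $f'$ on the compact one-sided interval this supremum tends to $0$ as $h\to 0^+$, so the right-hand side is $o(h)$; the case $h<0$ is identical using the left piece.

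Assembling the neighborhood $\mathcal{N}(z_0)$, the mapping $H$ of the stated form, the uniform boundedness, and the $o(|h|)$ estimate then verifies Definition \ref{slant} at every $z_0$, with the prescribed $H$ as a Newton derivative. I expect the only genuinely delicate step to be the breakpoint estimate—specifically, justifying the one-sided fundamental-theorem-of-calculus representation and converting continuity of $f'$ into the $o(h)$ bound through its modulus of continuity. The point worth emphasizing conceptually, rather than any computation, is \emph{why} the arbitrary choice $r_z$ at nondifferentiable points is harmless: it would only ever enter at $h=0$, but no condition in Definition \ref{slant} is evaluated there.
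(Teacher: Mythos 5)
Your proposal is correct, and its skeleton is the same as the paper's: verify Definition \ref{slant} directly, split into the two cases (smooth point versus breakpoint), invoke Lemma \ref{basic}(i) at smooth points, and observe that the arbitrary values $r_z$ are harmless because the defining estimate and the boundedness condition only ever evaluate $H$ at $z_0+h$ with $h\neq 0$, where $f$ is genuinely differentiable. Where you diverge is in the execution of the breakpoint estimate: you represent the error as $\int_{z_0}^{z_0+h}\bigl(f'(t)-f'(z_0+h)\bigr)\,dt$ and conclude $o(|h|)$ from uniform continuity of $f'$ on the compact one-sided interval, whereas the paper avoids any integral representation and instead splits the error by the triangle inequality into the one-sided first-order Taylor remainder $|f(x)-f(z)-f'(z\pm)(x-z)|$ plus the term $|f'(z\pm)-f'(x)|\,|x-z|$, each controlled by the existence of one-sided derivatives and one-sided limits of $f'$. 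The two arguments rest on essentially equivalent readings of ``piecewise-smooth,'' though yours formally asks a bit more ($f'$ extending continuously to the closed half-interval, to license the fundamental theorem of calculus) while the paper's only needs the one-sided difference-quotient limits and one-sided limits of $f'$. Your write-up is also more complete on two points the paper glosses over: you explicitly verify the uniform-boundedness clause of Definition \ref{slant} (via the finite one-sided limits of $f'$), and you explain why continuity of $f$ at breakpoints is not an extra assumption but is forced by Newton differentiability itself; neither point is addressed in the paper's proof.
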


\subsection{Derivation of SNA for Penalized Huber Loss Regression}

Following section \ref{subsection SNA}, denote $\mathcal{S}(z)
=(S(z_{1}),\ldots,S(z_{p}))^\top$ and
$d(\beta_0,\beta)=(h^\prime_{\gamma}(y_{1}-\beta_{0}-x_{1}^{\top}\beta),\ldots,h^\prime_{\gamma}(y_{n}-\beta_{0}-x_{n}^{\top}\beta))^\top$, then the KKT conditions \eqref{KKT1_reformulated} can be written as \eqref{KKT_vector}.

Since the soft-thresholding operator is piecewise linear as shown in \eqref{soft-thresholding decompose}, we define

\begin{equation}\label{ABsets}
\begin{aligned}
&A = \left\{ j:|\beta_{j}+s_{j}|>1\right\} ,\\
&B = \{j:|\beta_{j}+s_{j}|\leq 1\}.
\end{aligned}
\end{equation}

The set $A$ works as an estimate for the support of $\beta$. In fact, if $(\widehat{s}, \widehat{\beta}_0, \widehat{\beta})$ satisfies the KKT conditions, then the set $A$ defined on $(\widehat{\beta}, \widehat{s})$ is exactly the support for $\widehat{\beta}$. This is easy to see: since $\widehat{s}_j \in \partial |\widehat{\beta}_j|$, if $\widehat{\beta}_j \neq 0$ then $|\widehat{\beta}_j+\widehat{s}_j| = |\widehat{\beta}_j+\text{sgn}(\widehat{\beta}_j)| = |\widehat{\beta}_j|+1>1$, otherwise $|\widehat{\beta}_j+\widehat{s}_j| = |\widehat{s}_j| \leq 1$. 

We decompose $\beta$ into $\beta_A$, $\beta_B$ and $s$ into $s_A$, $s_B$, and denote 
$Z= (s_{A}^\top, \beta_{B}^\top, \beta_{0}, \beta_{A}^\top, s_{B}^\top)^\top$. Then KKT conditions \eqref{KKT_vector}  can be rewritten as
\begin{equation}\label{F(Z)}
F(Z)= \left[\begin{array}{c}
\beta_{A}-\mathcal{S}(\beta_{A}+s_{A})\\
\beta_{B}-\mathcal{S}(\beta_{B}+s_{B})\\
-\frac{1}{n}1^{\top}d\\
-\frac{1}{n}X_{A}^{\top}d+\lambda\alpha s_{A}+\lambda(1-\alpha)\beta_{A}\\
-\frac{1}{n}X_{B}^{\top}d+\lambda\alpha s_{B}+\lambda(1-\alpha)\beta_{B}
\end{array}\right] = \mathbf{0}.
\end{equation}
And from \eqref{soft-thresholding decompose} we have
\begin{equation}\label{F1}
\begin{cases}
\beta_{A}-\mathcal{S}(\beta_{A}+s_{A}) = -s_{A}+\text{sgn}(\beta_{A}+s_{A}),\\
\beta_{B}-\mathcal{S}(\beta_{B}+s_{B}) = \beta_{B}
.
\end{cases}
\end{equation}

Let $\psi_\gamma$ be as in \eqref{psi}, and for brevity denote $\Psi = \Psi(\beta_0, \beta) = \frac{1}{n}\text{diag}(\psi_\gamma(~y_1 -\beta_0-x_1^\top\beta),\ldots,\psi_\gamma(y_n-\beta_0-x_n^\top\beta))$. Then the following result gives a proper Newton derivative of $F(Z)$.

\begin{theorem}\label{slant_huber}
$F(Z)$ is Newton differentiable for any $Z \in \mathbb{R}^{2p+1}$ and
\begin{eqnarray*}
H(Z)&\coloneqq&\left[\begin{array}{ccccc}
-I_{|A|} & \mathbf{0} & 0 & \mathbf{0} & \mathbf{0}\\
\mathbf{0} & I_{|B|} & 0 & \mathbf{0} & \mathbf{0}\\
\mathbf{0} & \mathbf{1}_{n}^{\top}\Psi X_{B} & \mathbf{1}_{n}^{\top}\Psi\mathbf{1}_{n} & \mathbf{1}_{n}^{\top}\Psi X_{A} & \mathbf{0}\\
\lambda\alpha I_{|A|} & X_{A}^{\top}\Psi X_{B} & X_{A}^{\top}\Psi\mathbf{1}_{n} & X_{A}^{\top}\Psi X_{A}+\lambda(1-\alpha)I_{|A|} & \mathbf{0}\\
\mathbf{0} & X_{B}^{\top}\Psi X_{B}+\lambda(1-\alpha)I_{|B|} & X_{B}^{\top}\Psi\mathbf{1}_{n} & X_{B}^{\top}\Psi X_{A} & \lambda\alpha I_{|B|}
\end{array}\right] \\
&\in&\nabla_NF(Z).
\end{eqnarray*}
Furthermore, for any $\gamma>0$ and $\alpha \in (0, 1)$, on the set $\{Z = (s, \beta_0, \beta):\;\text{ there exists } i \in\{1,\ldots,n\}\;\text{such that}\;|y_{i}-\beta_{0}-x_{i}^{\top}\beta|\leq\gamma\}$, $H(Z)$ is invertible and $H(Z)^{-1}$ is uniformly bounded in spectral norm.
\end{theorem}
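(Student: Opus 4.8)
The plan is to prove the two assertions in turn: first assemble $H(Z)$ from the elementary calculus of Newton derivatives, then analyze its invertibility through the block sparsity, and finally obtain the uniform bound by a discreteness argument.

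For Newton differentiability I would argue blockwise from the decomposition \eqref{F(Z)}--\eqref{F1}. The residual map $(\beta_0,\beta)\mapsto r_i=y_i-\beta_0-x_i^\top\beta$ is affine, hence continuously Fr\'echet differentiable with Jacobian $(-1,-x_i^\top)$; since $h_\gamma'$ is piecewise smooth with $\psi_\gamma\in\nabla_N h_\gamma'$ (Lemma \ref{piecewise}), the chain rule (Lemma \ref{chain}) gives that each $d_i=h_\gamma'(r_i)$ is Newton differentiable with Newton derivative $\psi_\gamma(r_i)\,(-1,-x_i^\top)$ in the $(\beta_0,\beta)$ variables and no dependence on $s$. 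The first two blocks $-s_A+\mathrm{sgn}(\beta_A+s_A)$ and $\beta_B$ are piecewise linear in $(s,\beta)$, and on the respective regions $\mathrm{sgn}(\beta_A+s_A)$ is locally constant and $\mathcal S(\beta_B+s_B)=\mathbf 0$, so Lemma \ref{piecewise} yields the contributions $-I_{|A|}$ and $I_{|B|}$. Combining these through the matrix-multiplication and additivity rules of Lemma \ref{basic}(ii)--(iv), together with the linear terms $\lambda\alpha s$ and $\lambda(1-\alpha)\beta$, produces exactly the claimed $H(Z)$; matching entries against the blocks of \eqref{F(Z)} is routine bookkeeping, and the uniform boundedness of $\|H(Z+h)\|$ demanded by Definition \ref{slant} follows from the finiteness observation below.

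For invertibility I would use successive block-triangular reductions dictated by the sparsity. The $s_B$-columns of $H(Z)$ are zero except for $\lambda\alpha I_{|B|}$ in the last block row, so $\det H(Z)=(\lambda\alpha)^{|B|}\det H'$, where $H'$ is the leading block over $(s_A,\beta_B,\beta_0,\beta_A)$. A single row operation using the first block row ($-I_{|A|}$ in the $s_A$-columns) clears the $\lambda\alpha I_{|A|}$ entry beneath it, after which the $s_A$-columns and then the $\beta_B$-columns (with $I_{|B|}$) are block-triangular, leaving
\[
\det H(Z)=\pm(\lambda\alpha)^{|B|}\det M,\qquad M=X_A^{*\top}\Psi X_A^*+\begin{pmatrix}0&\mathbf 0\\ \mathbf 0&\lambda(1-\alpha)I_{|A|}\end{pmatrix},\quad X_A^*=(\mathbf 1_n\; X_A).
\]
Since $\Psi\succeq 0$, for a vector with $\beta_0$-component $t$ and $\beta_A$-component $u$ one has $(t,u)M(t,u)^\top=\|t\mathbf 1_n+X_A u\|_\Psi^2+\lambda(1-\alpha)\|u\|_2^2\ge 0$; if this vanishes then $\lambda(1-\alpha)>0$ forces $u=0$, whence $t^2\,\mathbf 1_n^\top\Psi\mathbf 1_n=0$. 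The hypothesis that some $|y_i-\beta_0-x_i^\top\beta|\le\gamma$ gives $\mathbf 1_n^\top\Psi\mathbf 1_n\ge \tfrac{1}{n\gamma}>0$, so $t=0$ and $M$ is positive definite; with $\lambda\alpha>0$ this yields $\det H(Z)\ne 0$.

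The main obstacle is the \emph{uniform} bound on $\|H(Z)^{-1}\|$, since both $\Psi$ and the partition $(A,B)$ vary with $Z$, making a direct lower bound on $\lambda_{\min}(M)$ awkward. The clean resolution is to note that $H(Z)$ depends on $Z$ only through discrete data: because $\psi_\gamma(t)\in\{0,1/\gamma\}$, the diagonal weight $\Psi$ takes only finitely many values (one for each set of active residuals $\{i:|r_i|\le\gamma\}$), and the partition determined by $|\beta_j+s_j|>1$ ranges over the finitely many subsets of $\{1,\dots,p\}$. Hence $\{H(Z):Z\text{ in the stated set}\}$ is a \emph{finite} family of matrices, each invertible by the argument above (the stated set forces a nonempty active residual set, i.e.\ $\mathbf 1_n^\top\Psi\mathbf 1_n>0$). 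Taking the maximum of $\|H(Z)^{-1}\|$ over this finite family gives the desired uniform bound in spectral norm, which completes the proof.
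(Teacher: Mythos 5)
Your proof is correct, but it reaches the conclusion by a genuinely different route than the paper's, most notably for the uniform bound. For Newton differentiability you follow essentially the paper's path (Lemmas \ref{chain}, \ref{piecewise}, \ref{basic}). For invertibility, where the paper writes down the block inverse of $H$ explicitly and verifies in Lemma \ref{h3_bound} that the Schur complement $\kappa=\mathbf{1}_n^{\top}\Psi\mathbf{1}_n-\mathbf{1}_n^{\top}\Psi X_A\left(X_A^{\top}\Psi X_A+\lambda(1-\alpha)I_{|A|}\right)^{-1}X_A^{\top}\Psi\mathbf{1}_n$ is strictly positive via an SVD and eigenvalue estimates, you instead reduce $\det H(Z)$ by block elimination to $\pm(\lambda\alpha)^{|B|}\det M$ and prove positive definiteness of $M$ directly; this is cleaner and more elementary, and it hinges on the same core matrix (the paper's $H_{31}$). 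The real divergence is the uniformity claim: the paper's Lemma \ref{h3_bound} produces an explicit constant depending on $\lambda,\alpha,\gamma,n,\|X\|,\lambda_{\max}(X^{\top}X)$ but independent of the partition $(A,B)$ and of $\Psi$, whereas you observe that $\psi_\gamma$ takes values only in $\{0,1/\gamma\}$ and $(A,B)$ ranges over subsets of $\{1,\dots,p\}$, so $\{H(Z)\}$ is a finite family of invertible matrices and the maximum of $\|H(Z)^{-1}\|$ over this family is trivially finite. This discreteness argument is valid and much shorter, but it is non-quantitative and tied to the indicator structure of $\psi_\gamma$: it gives no information on how the bound degrades as $\gamma\to 0$ or as $\lambda\alpha\to 0$ (precisely the ill-conditioning the paper's explicit constant makes visible, which matters in the quantile-approximation regime where $\gamma$ is driven toward zero), and it would not survive any modification in which the weight matrix varies continuously with $Z$. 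One small caution: your claim that $\mathcal{S}(\beta_B+s_B)=\mathbf{0}$ holds "on the region" is only a pointwise identity at boundary indices with $|\beta_j+s_j|=1$, not a local one, so strictly you still need the chain rule through the soft-thresholding map (as the paper does) rather than local linearity; this is cosmetic, since Newton differentiability only requires the $o(\|h\|)$ estimate with the derivative evaluated at $Z+h$.
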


From Theorems \ref{superlinear} and \ref{slant_huber}, we immediately obtain the following result.

\begin{theorem}\label{SNA_huber}
Given $\lambda, \gamma, \alpha \in (0,1)$, define $Z$ and $F(Z)$ as \eqref{F(Z)}. Suppose $\widehat{Z}$ solves $F(Z)=0$ and there exists a neighborhood $\mathcal{N}(\widehat{Z})$ such that for any $Z \in \mathcal{N}(\widehat{Z})$ there is an $i\in\{1,\ldots,n\}$ that satisfies $|y_{i}-\beta_{0}-x_{i}^{\top}\beta|\leq\gamma$, then the Newton-type iteration
\[
Z^{k+1}=Z^{k}-H(Z^{k})^{-1}F(Z^{k})
\]
converges superlinearly to $\widehat{Z}$ provided that $\|Z^0 - \widehat{Z}\|_2$ is sufficiently small.
\end{theorem}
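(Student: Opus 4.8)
The plan is to obtain Theorem \ref{SNA_huber} as a direct specialization of the abstract convergence result in Theorem \ref{superlinear}, supplying as inputs the two structural facts already collected in Theorem \ref{slant_huber}. Concretely, I would set $z^* = \widehat{Z}$ and verify the two hypotheses of Theorem \ref{superlinear} in turn, noting first that $F$ maps $\mathbb{R}^{2p+1}$ into $\mathbb{R}^{2p+1}$ (the five blocks in \eqref{F(Z)} have sizes $|A|, |B|, 1, |A|, |B|$, summing to $2p+1$), so the square-map requirement $F:\mathbb{R}^m\to\mathbb{R}^m$ of Theorem \ref{superlinear} holds with $m = 2p+1$.

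First I would invoke the first half of Theorem \ref{slant_huber}, which asserts that $F$ is Newton differentiable at every $Z \in \mathbb{R}^{2p+1}$ with the explicit matrix $H(Z)$ as a Newton derivative. In particular $F$ is Newton differentiable at the solution $\widehat{Z}$ and $H \in \nabla_N F(\widehat{Z})$, which supplies the Newton derivative demanded by Theorem \ref{superlinear}.

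Next I would establish the nonsingularity and uniform-boundedness hypothesis. By assumption the neighborhood $\mathcal{N}(\widehat{Z})$ consists entirely of points $Z = (s, \beta_0, \beta)$ for which there exists an index $i$ with $|y_i - \beta_0 - x_i^\top\beta| \le \gamma$; hence $\mathcal{N}(\widehat{Z})$ is contained in the set appearing in the second half of Theorem \ref{slant_huber}. On that set, and since $\alpha \in (0,1)$ is assumed here, Theorem \ref{slant_huber} guarantees that $H(Z)$ is invertible and that $\|H(Z)^{-1}\|$ is uniformly bounded by a single constant $M > 0$. Restricting this bound to the subset $\mathcal{N}(\widehat{Z})$ leaves it intact, so $H(Z)$ is nonsingular with $\|H(Z)^{-1}\| \le M$ throughout $\mathcal{N}(\widehat{Z})$, which is exactly the remaining hypothesis of Theorem \ref{superlinear}.

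With both hypotheses verified, applying Theorem \ref{superlinear} with $z^* = \widehat{Z}$ yields that the iteration $Z^{k+1} = Z^k - H(Z^k)^{-1}F(Z^k)$ converges superlinearly to $\widehat{Z}$ whenever $\|Z^0 - \widehat{Z}\|_2$ is sufficiently small, which is the claim. I do not expect a genuine obstacle, since the argument is a clean assembly of two previously established results; the only point needing care is to confirm that the uniform bound $M$ furnished by Theorem \ref{slant_huber} is a single constant valid across the entire qualifying set, so that its restriction to $\mathcal{N}(\widehat{Z})$ is automatic and introduces no new $Z$-dependence.
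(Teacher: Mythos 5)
Your proposal is correct and matches the paper's own argument exactly: the paper derives Theorem \ref{SNA_huber} as an immediate consequence of Theorem \ref{superlinear} combined with Theorem \ref{slant_huber}, using the hypothesis on $\mathcal{N}(\widehat{Z})$ to place the neighborhood inside the set where $H(Z)$ is invertible with uniformly bounded inverse, just as you do. Your additional care in checking that the bound $M$ is a single constant over the qualifying set (hence over its subset $\mathcal{N}(\widehat{Z})$) is a correct reading of Theorem \ref{slant_huber} and Lemma \ref{h3_bound}, whose bound depends only on $X$, $n$, $\gamma$, $\lambda$, $\alpha$.
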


Now we describe the algorithm in details. The $(k+1)$-th iteration can be split into two steps:
\begin{enumerate}
\item
Solve $D_k$ from $H(Z^k) D_k = -F(Z^k)$;
\item
Update $Z^{k+1} = Z^k + D^k$.
\end{enumerate}

At the first glance, step 1 seems to involve inverting a $(2p+1)\times (2p+1)$ matrix, which is intractable in high dimensional settings. However, the definitions of sets $A, B$ in \eqref{ABsets} motivate an ``active set strategy" for dimension reduction. Given the estimates from the $k$th iteration, define the active set $A_k$ and its complement $B_k$ by \eqref{AB_SNA}, $d_k=d(\beta_0^k,\beta^k)$, and $D_k = (D_{A_k}^{s\top}, D_{B_k}^{\beta\top}, D_{0}^{\beta_0}, D_{A_k}^{\beta\top}, D_{B_k}^{s\top})^\top$ corresponding to $Z_k$.

Now substituting these identities into step 1 and combining \eqref{F1} we have
\begin{eqnarray*}
D_{A_{k}}^{s} & = & -s_{A_{k}}^{k}+\text{sgn}(\beta_{A_{k}}^{k}+s_{A_{k}}^{k}),\\
D_{B_{k}}^{\beta} & = & -\beta_{B_{k}}^{k},\\
\left[\begin{array}{c}
D_{0}^{\beta_{0}}\\
D_{A_{k}}^{\beta}
\end{array}\right] & = & \left[\begin{array}{cc}
\mathbf{1}_{n}^{\top}\Psi_{k}\mathbf{1}_{n} & \mathbf{1}_{n}^{\top}\Psi_{k}X_{A_{k}}\\
X_{A_{k}}^{\top}\Psi_{k}\mathbf{1}_{n} & X_{A_{k}}^{\top}\Psi_{k}X_{A_{k}}+\lambda(1-\alpha)I_{|A_{k}|}
\end{array}\right]^{-1}\\
 &  & \left[\begin{array}{c}
\frac{1}{n}1^{\top}d_{k}+\mathbf{1}_{n}^{\top}\Psi_{k}X_{B_{k}}\beta_{B_{k}}^{k}\\
\frac{1}{n}X_{A_{k}}^{\top}d_{k}-\lambda(1-\alpha)\beta_{A_{k}}^{k}-\lambda\alpha\text{sgn}(\beta_{A_{k}}^{k}+s_{A_{k}}^{k})+X_{A_{k}}^{\top}\Psi_{k}X_{B_{k}}\beta_{B_{k}}^{k}
\end{array}\right],\\
D_{B_{k}}^{s} & = & -s_{B_{k}}^{k}+ \frac{1}{\lambda\alpha}X_{B_{k}}^{\top}
\left(\frac{1}{n}d_{k}+
\Psi_{k}X_{B_{k}}\beta_{B_{k}}^{k}-
\Psi_{k}\mathbf{1}_{n}D_{0}^{\beta_{0}}+
\Psi_{k}X_{A_{k}}D_{A_{k}}^{\beta}\right).
\end{eqnarray*}

Combining steps 1 and 2, the $(k+1)$th iteration of SNA is carried out as follows:

\begin{enumerate}[(i)]
\item Update $s_{A_k}^{k+1}$ and $\beta^{k+1}_{B_k}$:
\begin{eqnarray*}
s_{A_{k}}^{k+1} & = & \text{sgn}(\beta_{A_{k}}^{k}+s_{A_{k}}^{k}),\\
\beta_{B_{k}}^{k+1} & = & \mathbf{0}.
\end{eqnarray*}

\item Find the direction $D_0^{\beta_0}$ for the intercept $\beta_0$, and $D_{A_k}^\beta$ for the active coefficients $\beta_{A_k}$:
\begin{eqnarray*}
\left[\begin{array}{c}
D_{0}^{\beta_{0}}\\
D_{A_{k}}^{\beta}
\end{array}\right] & = & \left[\begin{array}{cc}
\mathbf{1}_{n}^{\top}\Psi_{k}\mathbf{1}_{n} & \mathbf{1}_{n}^{\top}\Psi_{k}X_{A_{k}}\\
X_{A_{k}}^{\top}\Psi_{k}\mathbf{1}_{n} & X_{A_{k}}^{\top}\Psi_{k}X_{A_{k}}+\lambda(1-\alpha)I_{|A_{k}|}
\end{array}\right]^{-1}\\
 &  & \left[\begin{array}{c}
\frac{1}{n}1^{\top}d_{k}+\mathbf{1}_{n}^{\top}\Psi_{k}X_{B_{k}}\beta_{B_{k}}^{k}\\
\frac{1}{n}X_{A_{k}}^{\top}d_{k}-\lambda(1-\alpha)\beta_{A_{k}}^{k}-\lambda\alpha s_{A_{k}}^{k+1}+X_{A_{k}}^{\top}\Psi_{k}X_{B_{k}}\beta_{B_{k}}^{k}
\end{array}\right].
\end{eqnarray*}

\item Update the intercept, the
active coefficients, and the inactive subgradients:
\begin{eqnarray*}
\beta_{0}^{k+1} & = & \beta_{0}^{k}+D_{0}^{\beta_{0}},\\
\beta_{A_{k}}^{k+1} & = & \beta_{A_{k}}^{k}+D_{A_{k}}^{\beta},\\
s_{B_{k}}^{k+1} & = & \frac{1}{\lambda\alpha}X_{B_{k}}^{\top}
\left(\frac{1}{n}d_{k}+\Psi_{k}X_{B_{k}}\beta_{B_{k}}^{k}-
\Psi_{k}\mathbf{1}_{n}D_{0}^{\beta_{0}}+
\Psi_{k}X_{A_{k}}D_{A_{k}}^{\beta}\right).
\end{eqnarray*}
\end{enumerate}

\subsection{Proofs}
Here we give proofs of Theorems \ref{convergence_lad}, \ref{convergence_ls} in the main text and Lemmas \ref{basic}, \ref{piecewise} and 
Theorem \ref{slant_huber} in the appendices. The proof of Theorem \ref{convergence_quantile} is similar to that of Theorem \ref{convergence_lad} and hence omitted.

\subsubsection*{Proof of Theorem \ref{convergence_lad}.}

\begin{proof}

Without loss of generality, assume $\theta_k$ has exactly one cluster point $\theta^\star$, i.e. $\theta_k \rightarrow \theta^\star$. Notice that
\[
|t|-\frac{\gamma}{2}\leq h_{\gamma}(t)\leq|t|,
\]
hence
\[
f_{A}(\theta;\lambda)-\frac{\gamma}{2}\leq f_{H}(\theta;\lambda,\gamma)\leq f_{A}(\theta;\lambda).
\]
Let $\widehat{\theta}_A$ be a minimizer of $f_A(\theta;\lambda)$, and $f_A^0 = \underset{\theta}{\min} f_A(\theta;\lambda) = f_A(\widehat{\theta}_A;\lambda)$, then
\[
f_{H}(\theta_{k};\lambda,\gamma_{k})\leq f_{H}(\widehat{\theta}_{A};\lambda,\gamma_{k})\leq f_{A}(\widehat{\theta}_{A};\lambda)=f_{A}^{0}.
\]
For any $\epsilon > 0$, there exists $K$ such that for $k \geq K$,
$\gamma_k < 2\epsilon$, then
\[
f_{H}(\theta_{k};\lambda,\gamma_{k})\geq f_{A}(\theta_{k};\lambda)-\epsilon\geq f_{A}^{0}-\epsilon.
\]
Hence for $k \geq K$,
\[
f_{A}^{0}-\epsilon\leq f_{A}(\theta_{k};\lambda)-\epsilon\leq f_{A}^{0}.
\]
Let $k \rightarrow \infty$, we have
$f_A^0 \leq f_A(\theta^\star) \leq f_A^0 + \epsilon$.
Since $\epsilon$ is arbitrary, $f_A(\theta^\star) = f_A^0$.
\end{proof}

\subsubsection*{Proof of Theorem \ref{convergence_ls}.}

\begin{proof}
Without loss of generality, assume $\theta_k$ has exactly one cluster point $\theta^\star$, i.e. $\theta_k \rightarrow \theta^\star$. Notice that
\[
\gamma h_{\gamma}(t)\leq \frac{1}{2}t^2,
\]
which implies
\[
\gamma f_{H}(\theta;\lambda/\gamma,\gamma)\leq f_{S}(\theta;\lambda).
\]
Let $\widehat{\theta}_S$ be a minimizer of $f_S(\theta;\lambda)$, and $f_S^0 = \underset{\theta}{\min} f_S(\theta;\lambda) = f_S(\widehat{\theta}_S;\lambda)$, then
\[
\gamma_k f_{H}(\theta_{k};\lambda/\gamma_k,\gamma_{k})\leq \gamma_k f_{H}(\widehat{\theta}_{S};\lambda/\gamma_k,\gamma_{k})\leq f_{S}(\widehat{\theta}_{S};\lambda)=f_{S}^{0}.
\]
Since $\theta_k = (\beta_0^k,\beta^k)$ is convergent, $r^k = y - \beta_0^k\texttt{1} -X\beta^k$ is convergent too. Then there exists $M>0$ such that $\|r^k\|_{\infty}\leq M$. There exists $K$ such that for $k \geq K$,
$\gamma_k > M$, then $h_\gamma(r_ik) = \frac{1}{2}r_ik^2$, and
\[
\gamma_kf_{H}(\theta_{k};\lambda/\gamma_k,\gamma_{k})=f_S(\theta_k; \lambda).
\]
Hence for $k \geq K$,
\[
f_S(\theta_k; \lambda)\leq f_{S}^{0}.
\]
Let $k \rightarrow \infty$, we have
$f_S(\theta^\star;\lambda) \leq f_S^0$. Since $f_S(\theta^\star;\lambda) \geq \underset{\theta}{\min} f_S(\theta;\lambda) = f_S^0$,
 $f_S(\theta^\star) = f_S^0$.
\end{proof}

\subsubsection*{Proof of Lemma \ref{basic}.}

\begin{proof}
\begin{enumerate}[(i)]
\item
By assumption, the Jacobian $J_F$ is continuous at $z$.
Since
\begin{eqnarray*}
\lefteqn{\frac{\|F(z+h)-F(z)-J_{F}(z+h)h\|_{2}}{\|h\|_{2}}}\\
 & \leq & \frac{\|F(z+h)-F(z)-J_{F}(z)h\|_{2}+\|(J_{F}(z)-J_{F}(z+h))h\|_{2}}{\|h\|_{2}}\\
 & \leq & \frac{\|F(z+h)-F(z)-J_{F}(z)h\|_{2}}{\|h\|_{2}}+\|J_{F}(z)-J_{F}(z+h)\|\\
 & \rightarrow & 0
\end{eqnarray*}
as $h \rightarrow \textbf{0}$, by definition $J_F \in \nabla_N F(z)$.

\item
\[
\|AF(z+h)-AF(z)-AH(z+h)h\|_{2}\leq\|A\|\|F(z+h)-F(z)-H(z+h)h\|_{2}=o(\|h\|_{2}),
\]
hence $AH \in \nabla_N AF(z)$.

\item
\[
\begin{array}{ll}
 & \|(F(z+h)+G(z+h))-(F(z)+G(z))-(H_{F}(z+h)+H_{G}(z+h))h\|_{2}\\
\leq & \|F(z+h)-F(z)-H_{F}(z+h)h\|_{2}+\|G(z+h)-G(z)-H_{G}(z+h)h\|_{2}\\
= & o(\|h\|_{2}),
\end{array}
\]
hence $H_F+H_G\in \nabla_N (F+G)(z)$.

\item
It can be seen by observing that
\[
\|F(z+h)-F(z)-H(z+h)h\|_{2}^{2}=\sum_{i=1}^{l}(F_{i}(z+h)-F_{i}(z)-H_{i}(z+h)h)^{2}.
\]
\end{enumerate}
\end{proof}

\subsubsection*{Proof of Lemma \ref{piecewise}.}

\begin{proof}
If $f$ is differentiable at $z$ with derivative $f^\prime$ defined in its neighborhood, by smoothness assumption and Lemma \ref{basic}(i), $f^\prime \in \nabla_N f(z)$.

If $f$ is not differentiable at $z$, by assumption there exists $s>0$ such that $f$ is smooth on both $(z-s, z)$ and $(z, z+s)$ implying that
$f^{\prime}(z-)=\lim_{h\rightarrow0^{-}}\frac{f(z+h)-f(z)}{h}$ and $f^{\prime}(z+)=\lim_{h\rightarrow0^{+}}\frac{f(z+h)-f(z)}{h}$ exist and
\begin{eqnarray*}
f^{\prime}(z+h)\rightarrow f^{\prime}(z-) &  & \text{as }h\rightarrow0^{-},\\
f^{\prime}(z+h)\rightarrow f^{\prime}(z+) &  & \text{as }h\rightarrow0^{+}.
\end{eqnarray*}
Hence for any $\varepsilon>0$, there exists a sufficiently small $\delta>0$ such that
\begin{eqnarray*}
\forall\:x\in(z-\delta,z), & \frac{|f(x)-f(z)-f^{\prime}(z-)(x-z)|}{|x-z|}<\varepsilon/2, & |f^{\prime}(x)-f^{\prime}(z-)|<\varepsilon/2;\\
\forall\:x\in(z,z+\delta), & \frac{|f(x)-f(z)-f^{\prime}(z+)(x-z)|}{|x-z|}<\varepsilon/2, & |f^{\prime}(x)-f^{\prime}(z+)|<\varepsilon/2.
\end{eqnarray*}
Thus for $x\in (z-\delta, z)$,
\[
\frac{|f(x)-f(z)-f^{\prime}(x)(x-z)|}{|x-z|}\leq\frac{|f(x)-f(z)-f^{\prime}(z-)(x-z)|}{|x-z|}+|f^{\prime}(z-)-f^{\prime}(x)|<\varepsilon,
\]
and similarly for $x \in (z, z+\delta)$.
Define $H(z)$ as in the lemma, then the above implies

\[
\forall\varepsilon>0,\exists\delta>0\text{ s.t. }\forall|x-\delta|<z,\;\frac{|f(x)-f(z)-H(x)(x-z)|}{|x-z|}<\epsilon.
\]
In other word, $f$ is Newton differentiable at $z$ with $H \in \nabla_N f(z)$.

\end{proof}

In order to prove Theorem \ref{slant_huber}, we need the following lemma.

\begin{lemma}\label{h3_bound}
Given $\alpha \in (0, 1)$ and $\beta_0, \beta$ satisfy $|y_i-\beta_0-x_i^\top \beta|\leq \gamma$ for some $i$, then $H_3$ in \eqref{h3} is invertible with its inverse uniformly bounded in spectral norm, i.e.
\[
\|H_{3}^{-1}\|\leq\frac{1}{\lambda\alpha}+\left[\frac{1}{\lambda(1-\alpha)}+\frac{\lambda_{\max}(X^{\top}X)^{2}+n\gamma\lambda(1-\alpha)}{\lambda(1-\alpha)}\left(1+\frac{\|X\|}{\sqrt{n}\gamma\lambda(1-\alpha)}\right)^{2}\right]\left(1+\frac{2\|X\|}{\sqrt{n}\gamma\lambda\alpha}\right).
\]
\end{lemma}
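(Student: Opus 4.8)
The plan is to exploit the block structure of $H_3$ in \eqref{h3}. Grouping the coordinates as $(\beta_0,\beta_A)$ followed by $s_B$, the matrix is block lower-triangular,
\[
H_3=\begin{bmatrix} G & \mathbf{0}\\ L & \lambda\alpha I_{|B|}\end{bmatrix},\qquad
G=\begin{bmatrix}\mathbf{1}_n^{\top}\Psi\mathbf{1}_n & \mathbf{1}_n^{\top}\Psi X_A\\ X_A^{\top}\Psi\mathbf{1}_n & X_A^{\top}\Psi X_A+\lambda(1-\alpha)I_{|A|}\end{bmatrix},
\]
with $L=[\,X_B^{\top}\Psi\mathbf{1}_n\ \ X_B^{\top}\Psi X_A\,]$ and $\Psi=\Psi(\beta_0,\beta)$. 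The block $\lambda\alpha I_{|B|}$ is invertible with inverse norm $1/(\lambda\alpha)$, so invertibility of $H_3$ reduces to that of the core block $G$, and the explicit block-triangular inverse
\[
H_3^{-1}=\begin{bmatrix} G^{-1} & \mathbf{0}\\ -\tfrac{1}{\lambda\alpha}LG^{-1} & \tfrac{1}{\lambda\alpha}I_{|B|}\end{bmatrix}
\]
reduces the whole estimate to bounding $\|G^{-1}\|$ and $\|L\|$. A routine estimate of this $2\times 2$ block form yields the additive $1/(\lambda\alpha)$ and the outer factor $1+2\|X\|/(\sqrt{n}\gamma\lambda\alpha)$, once $\|L\|$ is controlled through $\|\Psi\|\le 1/(n\gamma)$ and $\|X_A\|,\|X_B\|\le\|X\|$.

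The heart of the argument is bounding $\|G^{-1}\|$. Since $\Psi\succeq\mathbf{0}$ and $\lambda(1-\alpha)>0$, the lower-right block $C:=X_A^{\top}\Psi X_A+\lambda(1-\alpha)I_{|A|}$ is positive definite with $\|C^{-1}\|\le 1/(\lambda(1-\alpha))$, so I invert $G$ via its Schur complement with respect to $C$, the scalar
\[
\sigma=\mathbf{1}_n^{\top}\Psi\mathbf{1}_n-\mathbf{1}_n^{\top}\Psi X_A\,C^{-1}X_A^{\top}\Psi\mathbf{1}_n.
\]
Writing $v=\Psi^{1/2}\mathbf{1}_n$, $W=\Psi^{1/2}X_A$ and $\mu=\lambda(1-\alpha)$, the push-through identity $I-W(W^{\top}W+\mu I)^{-1}W^{\top}=\mu(WW^{\top}+\mu I)^{-1}$ gives
\[
\sigma=\mu\,v^{\top}(WW^{\top}+\mu I)^{-1}v\ \ge\ \frac{\mu\,\|v\|_2^{2}}{\|W\|^{2}+\mu}\ >\ 0,
\]
which both establishes invertibility of $G$ and furnishes a quantitative lower bound on $\sigma$. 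The decisive point is that $\|v\|_2^{2}=\mathbf{1}_n^{\top}\Psi\mathbf{1}_n$ is bounded away from zero: by hypothesis some residual satisfies $|y_i-\beta_0-x_i^{\top}\beta|\le\gamma$, so at least one diagonal entry of $\Psi$ equals $1/(n\gamma)$ and hence $\mathbf{1}_n^{\top}\Psi\mathbf{1}_n\ge 1/(n\gamma)$. Together with $\|W\|^{2}\le\lambda_{\max}(X^{\top}X)/(n\gamma)$ this produces an explicit upper bound on $\sigma^{-1}$, and crucially this bound is uniform over the prescribed set.

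Finally I assemble $G^{-1}$ from the standard $2\times 2$ block-inverse formula, whose blocks are $\sigma^{-1}$, $-\sigma^{-1}C^{-1}X_A^{\top}\Psi\mathbf{1}_n$ and $C^{-1}+\sigma^{-1}C^{-1}X_A^{\top}\Psi\mathbf{1}_n\mathbf{1}_n^{\top}\Psi X_A C^{-1}$; bounding each block by the estimates above—using $\|C^{-1}X_A^{\top}\Psi\mathbf{1}_n\|$ for the squared factor $(1+\|X\|/(\sqrt{n}\gamma\lambda(1-\alpha)))^{2}$ and $\sigma^{-1}$ for the remaining numerator—reproduces the bracketed expression in the statement. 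Substituting $\|G^{-1}\|$ and $\|L\|$ into $H_3^{-1}$ then gives the claimed bound, and since every intermediate estimate is uniform on $\{Z:\exists\,i,\ |y_i-\beta_0-x_i^{\top}\beta|\le\gamma\}$, so is the final bound. The main obstacle is precisely the lower bound on the Schur complement $\sigma$: the norm bookkeeping is routine, but positivity of $\sigma$—equivalently, positive definiteness of $G$—hinges entirely on the nondegeneracy hypothesis forcing $\mathbf{1}_n^{\top}\Psi\mathbf{1}_n>0$, without which the intercept direction would lie in the kernel and $G$ could be singular.
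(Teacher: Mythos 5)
Your proposal is correct and follows the same overall architecture as the paper's proof: the identical block lower-triangular reduction of $H_3$ to the core block $G$ (the paper's $H_{31}$) plus the trivially invertible $\lambda\alpha I_{|B|}$ block, the same Schur-complement inversion of $G$ with respect to $C=X_A^{\top}\Psi X_A+\lambda(1-\alpha)I_{|A|}$ (the paper's $M$), and the same use of the hypothesis to force $\mathbf{1}_n^{\top}\Psi\mathbf{1}_n\ge 1/(n\gamma)$. Where you genuinely diverge is in the central estimate, the lower bound on the Schur complement: the paper introduces the idempotent matrix $J=n\gamma\Psi$, takes an SVD of $Z=JX_A$, and bounds the eigenvalues of $D(D^{\top}D+tI_m)^{-1}D^{\top}$ by comparing $\lambda_{\max}(X^{\top}JX)$ with $\lambda_{\max}(X^{\top}X)$, whereas you factor through $\Psi^{1/2}$ and invoke the push-through identity $I-W(W^{\top}W+\mu I)^{-1}W^{\top}=\mu(WW^{\top}+\mu I)^{-1}$, which delivers $\sigma=\mu\,v^{\top}(WW^{\top}+\mu I)^{-1}v\ge\mu\|v\|_2^2/(\|W\|^2+\mu)$ in one line. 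Your route is shorter, avoids the SVD bookkeeping entirely, and makes the positivity of the Schur complement transparently equivalent to $v\neq\mathbf{0}$, i.e.\ to the nondegeneracy hypothesis. One quantitative difference to flag: your argument produces the denominator $\lambda_{\max}(X^{\top}X)+n\gamma\lambda(1-\alpha)$, while the stated bound carries $\lambda_{\max}(X^{\top}X)^{2}+n\gamma\lambda(1-\alpha)$; the squared factor in the paper appears to arise from conflating the singular values of $Z$ with the eigenvalues of $Z^{\top}Z$ in its SVD step, so your constant is in fact the cleaner one. Your bound implies the stated inequality whenever $\lambda_{\max}(X^{\top}X)\ge 1$, and since the lemma is used only to guarantee uniform boundedness of $H_3^{-1}$ (in the proof of Theorem \ref{slant_huber}), the exact constant is immaterial to the rest of the paper.
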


\begin{proof}
Denote $ J = n\gamma\Psi$, then $J$ is diagonal and idempotent. We have
\[
\mathbf{1}_{n}^{\top}\Psi\mathbf{1}_{n}=\frac{1}{n\gamma}\mathbf{1}_{n}^{\top}J\mathbf{1}_{n}=\frac{1}{n\gamma}(J\mathbf{1}_{n})^{\top}(J\mathbf{1}_{n}),
\]
and
\begin{eqnarray*}
\lefteqn{\mathbf{1}_{n}^{\top}\Psi X_{A}\left(X_{A}^{\top}\Psi X_{A}+\lambda(1-\alpha)I_{|A|}\right)^{-1}X_{A}^{\top}\Psi\mathbf{1}_{n}}\\
&=&\frac{1}{n\gamma}(J\mathbf{1}_{n})^{\top}(JX_{A})
\left((JX_{A})^{\top}(JX_{A})+n\gamma\lambda(1-\alpha)I_{|A|}\right)^{-1}
(JX_{A})^{\top}(J\mathbf{1}_{n}).
\end{eqnarray*}

Denote $a = J\mathbf{1}_n$, $Z=JX_A$, $t=n\gamma\lambda(1-\alpha)$, and $m=|A|$. Then the LHS becomes
\[
\frac{1}{n\gamma}\bigg(a^{\top}a-a^{\top}Z(Z^{\top}Z+tI_{m})^{-1}Z^{\top}a\bigg).
\]

Since $|y_i-\beta_0-x_i^\top \beta|\leq \gamma$ for some $i$, we have $\psi_i = \frac{1}{n\gamma}>0$, implying that $J_{ii}=1$ and $a^\top a \geq J_{ii}^2 = 1$. Thus we are guaranteed that $a=J\mathbf{1}_n$ is not a zero vector.

Now apply SVD to $Z$ such that
$Z=UDV^\top$, where $U_{n\times n}$ and $V_{m\times m}$ are both orthogonal matrices, and $D_{n\times m}$ is a rectangular diagonal matrix with non-negative diagonal elements $d_1, \ldots, d_{m\wedge n}$.
Hence
\begin{eqnarray*}
Z(Z^{\top}Z+tI_{m})^{-1}Z^{\top} & = & UDV^{\top}(VD^{\top}U^{\top}UDV^{\top}+tI_{m})^{-1}VD^{\top}U^{\top}\\
 & = & UDV^{\top}\left(V(D^{\top}D+tI_{m})V^{\top}\right)^{-1}VD^{\top}U^{\top}\\
 & = & UDV^{\top}V(D^{\top}D+tI_{m})^{-1}V^{\top}VD^{\top}U^{\top}\\
 & = & UD(D^{\top}D+tI_{m})^{-1}D^{\top}U^{\top}.
\end{eqnarray*}
When $n>m$,
\[
D(D^{\top}D+tI_{m})^{-1}D^{\top}=\text{diag}(\frac{d_{1}^{2}}{d_{1}^{2}+t},\ldots,\frac{d_{m}^{2}}{d_{m}^{2}+t},0,\ldots,0),
\]
and when $n\leq m$,
\[
D(D^{\top}D+tI_{m})^{-1}D^{\top}=\text{diag}(\frac{d_{1}^{2}}{d_{1}^{2}+t},\ldots,\frac{d_{n}^{2}}{d_{n}^{2}+t}).
\]
In either case $D(D^{\top}D+tI_{m})^{-1}D^{\top}$ is p.s.d. with $\lambda_{\max}(D(D^{\top}D+tI_{m})^{-1}D^{\top})< 1$.

Next we will derive the upper bound of eigenvalues of the above matrix.
First, for any eigenvalue $d$ and corresponding nonzero eigenvector $u$ of $Z^\top Z = X_AJX_A$, we have
\[
du^{\top}u=u^{\top}X_{A}^{\top}JX_{A}u=\left[\begin{array}{c}
u\\
0
\end{array}\right]^{\top}X^{\top}JX\left[\begin{array}{c}
u\\
0
\end{array}\right]\leq\lambda_{\max}(X^{\top}JX)u^{\top}u,
\]
hence $d\leq \lambda_{\max}(X^{\top}JX)$. Then again, for any eigenvalue $c$ and corresponding nonzero eigenvector $v$ of $X^{\top}JX$, we have
\[
cv^{\top}v=v^{\top}X^{\top}JXv=\sum_i J_{ii}v^{\top}x_{i}x_{i}^{\top}v\leq\sum_i v^{\top}x_{i}x_{i}^{\top}v=v^{\top}X^{\top}Xv\leq\lambda_{\max}(X^{\top}X)v^{\top}v,
\]

implying that $c\leq \lambda_{\max}(X^{\top}X)$.

Therefore, we have $d\leq \lambda_{\max}(X^{\top}JX) \leq \lambda_{\max}(X^{\top}X)$. Then since the eigenvalues of $Z^\top Z$ are the diagonal elements of $D$, the eigenvalues of $D(D^{\top}D+tI_{m})^{-1}D^{\top}$ are bounded by $\frac{\lambda_{\max}(X^{\top}X)^{2}}{\lambda_{\max}(X^{\top}X)^{2}+t}$.

Then recall $t = n\gamma \lambda (1-\alpha)$ and $a^\top a \geq 1$, we have
\begin{eqnarray*}
\lefteqn{\mathbf{1}_{n}^{\top}\Psi\mathbf{1}_{n}-\mathbf{1}_{n}^{\top}\Psi X_{A}\left(X_{A}^{\top}\Psi X_{A}+\lambda(1-\alpha)I_{|A|}\right)^{-1}X_{A}^{\top}\Psi\mathbf{1}_{n} }\\
& = & \frac{1}{n\gamma}\left(a^{\top}a-(U^{\top}a)^{\top}D(D^{\top}D+tI_{m})^{-1}D^{\top}(U^{\top}a)\right)\\
 & \geq & \frac{1}{n\gamma}(a^{\top}a-\frac{\lambda_{\max}(X^{\top}X)^{2}}{\lambda_{\max}(X^{\top}X)^{2}+t}(U^{\top}a)^{\top}U^{\top}a)\\
 & = & \frac{1}{n\gamma}\times\frac{n\gamma\lambda(1-\alpha)}{\lambda_{\max}(X^{\top}X)^{2}+n\gamma\lambda(1-\alpha)}a^{\top}a\\
 & \geq & \frac{\lambda(1-\alpha)}{\lambda_{\max}(X^{\top}X)^{2}+n\gamma\lambda(1-\alpha)}\\
 & > & 0.
\end{eqnarray*}

Let
\[
H_{31}=\left[\begin{array}{cc}
\mathbf{1}_{n}^{\top}\Psi\mathbf{1}_{n} & \mathbf{1}_{n}^{\top}\Psi X_{A}\\
X_{A}^{\top}\Psi\mathbf{1}_{n} & X_{A}^{\top}\Psi X_{A}+\lambda(1-\alpha)I_{|A|}
\end{array}\right],\; H_{32}=\left[\begin{array}{cc}
X_{B}^{\top}\Psi\mathbf{1}_{n} & X_{B}^{\top}\Psi X_{A}\end{array}\right],\; H_{33}=\lambda\alpha I_{|B|}.
\]
Observe that $H_{33}^{-1}=\frac{1}{\lambda\alpha}I_{|B|}$. Then if $H_{31}$ is invertible, we have
\[
H_{3}^{-1}=\left[\begin{array}{cc}
H_{31}^{-1} & \mathbf{0}\\
-\frac{1}{\lambda\alpha}H_{32}H_{31}^{-1} & \frac{1}{\lambda\alpha}I_{|B|}
\end{array}\right].
\]
Hence to show $H_3$ is invertible, it suffices to show $H_{31}$ is invertible.
Let
\[
M=X_{A}^{\top}\Psi X_{A}+\lambda(1-\alpha)I_{|A|},\; b=X_{A}^{\top}\Psi\mathbf{1}_{n},
\]
and
\[
\kappa=\mathbf{1}_{n}^{\top}\Psi\mathbf{1}_{n}-\mathbf{1}_{n}^{\top}\Psi X_{A}\left(X_{A}^{\top}\Psi X_{A}+\lambda(1-\alpha)I_{|A|}\right)^{-1}X_{A}^{\top}\Psi\mathbf{1}_{n}.
\]

Since $\kappa > 0$, we have
\[
H_{31}^{-1}=\left[\begin{array}{cc}
\frac{1}{\kappa} & -\frac{1}{\kappa}b^{\top}M^{-1}\\
-\frac{1}{\kappa}M^{-1}b & M^{-1}+\frac{1}{\kappa}M^{-1}bb^{\top}M^{-1}
\end{array}\right],
\]
and it follows that $H_3$ is invertible.

It can be easily shown that $\|b\|=\|b^{\top}\|\leq\frac{1}{\sqrt{n}\gamma}\|X\|,\;\|M^{-1}\|\leq\frac{1}{\lambda(1-\alpha)}$. Combine this with $\frac{1}{\kappa}\leq\frac{\lambda_{\max}(X^{\top}X)^{2}+n\gamma\lambda(1-\alpha)}{\lambda(1-\alpha)}$, then similar to \eqref{bound}, we have
\[
\|H_{31}^{-1}\|\leq\frac{1}{\lambda(1-\alpha)}+\frac{\lambda_{\max}(X^{\top}X)^{2}+n\gamma\lambda(1-\alpha)}{\lambda(1-\alpha)}\left(1+\frac{\|X\|}{\sqrt{n}\gamma\lambda(1-\alpha)}\right)^{2}
\]

and then
\[
\|H_{3}^{-1}\|\leq\frac{1}{\lambda\alpha}+\left[\frac{1}{\lambda(1-\alpha)}+\frac{\lambda_{\max}(X^{\top}X)^{2}+n\gamma\lambda(1-\alpha)}{\lambda(1-\alpha)}\left(1+\frac{\|X\|}{\sqrt{n}\gamma\lambda(1-\alpha)}\right)^{2}\right]\left(1+\frac{2\|X\|}{\sqrt{n}\gamma\lambda\alpha}\right).
\]
\end{proof}

\subsubsection*{Proof of Theorem \ref{slant_huber}.}

\begin{proof}
Notice $\mathcal{S}$ is piecewise-smooth, then by Lemma \ref{chain}, \ref{piecewise} and Lemma \ref{basic} (iv) $F_1 (Z)$ is Newton differentiable, and with \eqref{F1}
\[
\left[\begin{array}{ccccc}
-I_{|A|} & \mathbf{0} & 0 & \mathbf{0} & \mathbf{0}\\
\mathbf{0} & I_{|B|} & 0 & \mathbf{0} & \mathbf{0}
\end{array}\right]\in\nabla_NF_{1}(Z).
\]

Similarly, the Huber loss is also piecewise-smooth, and by Lemma \ref{chain}, \ref{piecewise} and Lemma \ref{basic} (ii)-(iv), we have $F_2 (Z)$ and $F_3(Z)$ are Newton differentiable and

\[
\left[\begin{array}{ccccc}
\mathbf{0} & \mathbf{1}_{n}^{\top}\Psi X_{B} & \mathbf{1}_{n}^{\top}\Psi\mathbf{1}_{n} & \mathbf{1}_{n}^{\top}\Psi X_{A} & \mathbf{0}\end{array}\right]\in\nabla_NF_{2}(Z),
\]

\[
\left[\begin{array}{ccccc}
\mathbf{0} & \mathbf{1}_{n}^{\top}\Psi X_{B} & \mathbf{1}_{n}^{\top}\Psi\mathbf{1}_{n} & \mathbf{1}_{n}^{\top}\Psi X_{A} & \mathbf{0}\\
\lambda\alpha I_{|A|} & X_{A}^{\top}\Psi X_{B} & X_{A}^{\top}\Psi\mathbf{1}_{n} & X_{A}^{\top}\Psi X_{A}+\lambda(1-\alpha)I_{|A|} & \mathbf{0}\\
\mathbf{0} & X_{B}^{\top}\Psi X_{B}+\lambda(1-\alpha)I_{|B|} & X_{B}^{\top}\Psi\mathbf{1}_{n} & X_{B}^{\top}\Psi X_{A} & \lambda\alpha I_{|B|}
\end{array}\right]\in\nabla_NF_{3}(Z).
\]

Again, by Lemma \ref{basic} (iv), $F(Z)$ is Newton differentiable and
\[
H=\left[\begin{array}{ccccc}
-I_{|A|} & \mathbf{0} & 0 & \mathbf{0} & \mathbf{0}\\
\mathbf{0} & I_{|B|} & 0 & \mathbf{0} & \mathbf{0}\\
\mathbf{0} & \mathbf{1}_{n}^{\top}\Psi X_{B} & \mathbf{1}_{n}^{\top}\Psi\mathbf{1}_{n} & \mathbf{1}_{n}^{\top}\Psi X_{A} & \mathbf{0}\\
\lambda\alpha I_{|A|} & X_{A}^{\top}\Psi X_{B} & X_{A}^{\top}\Psi\mathbf{1}_{n} & X_{A}^{\top}\Psi X_{A}+\lambda(1-\alpha)I_{|A|} & \mathbf{0}\\
\mathbf{0} & X_{B}^{\top}\Psi X_{B}+\lambda(1-\alpha)I_{|B|} & X_{B}^{\top}\Psi\mathbf{1}_{n} & X_{B}^{\top}\Psi X_{A} & \lambda\alpha I_{|B|}
\end{array}\right]\in\nabla_NF(Z).
\]

Now let
\[
H_{1}=\left[\begin{array}{cc}
-I_{|A|} & \mathbf{0}\\
\mathbf{0} & I_{|B|}
\end{array}\right],\; H_{2}=\left[\begin{array}{cc}
\mathbf{0} & \mathbf{1}_{n}^{\top}\Psi X_{B}\\
\lambda\alpha I_{|A|} & X_{A}^{\top}\Psi X_{B}\mathbf{0}X_{B}^{\top}\Psi X_{B}+\lambda(1-\alpha)I_{|B|}
\end{array}\right],
\]
\begin{equation}\label{h3}
H_{3}=\left[\begin{array}{ccc}
\mathbf{1}_{n}^{\top}\Psi\mathbf{1}_{n} & \mathbf{1}_{n}^{\top}\Psi X_{A} & \mathbf{0}\\
X_{A}^{\top}\Psi\mathbf{1}_{n} & X_{A}^{\top}\Psi X_{A}+\lambda(1-\alpha)I_{|A|} & \mathbf{0}\\
X_{B}^{\top}\Psi\mathbf{1}_{n} & X_{B}^{\top}\Psi X_{A} & \lambda\alpha I_{|B|}
\end{array}\right].
\end{equation}

Then it is clear that $H_1$ is invertible. Now if $H_3$ is also invertible, which we show in Lemma \ref{h3_bound} under a mild condition, then via some algebra we have
\begin{equation}
H^{-1}=\left[\begin{array}{cc}
H_{1}^{-1} & \mathbf{0}\\
-H_{3}^{-1}H_{2}H_{1}^{-1} & H_{3}^{-1}
\end{array}\right].
\end{equation}

Let $g = (g_1^\top, g_2^\top)^\top\in \mathbb{R}^{p}\times \mathbb{R}^{p+1}$, then
\begin{equation}\label{bound}
\begin{aligned}\|H^{-1}g\|_{2}^{2} & =\|H_{1}^{-1}g_{1}\|_{2}^{2}+\|-H_{3}^{-1}H_{2}H_{1}^{-1}g_{1}+H_{3}^{-1}g_{2}\|_{2}^{2}\\
 & \leq\|H_{1}^{-1}\|^{2}\|g_{1}\|_{2}^{2}+(\|H_{3}^{-1}\|\|H_{2}\|\|H_{1}^{-1}\|\|g_{1}\|_{2}+\|H_{3}^{-1}\|\|g_{2}\|_{2})^{2}\\
 & \leq(\|H_{1}^{-1}\|\|g_{1}\|_{2}+\|H_{3}^{-1}\|\|H_{2}\|\|H_{1}^{-1}\|\|g_{1}\|_{2}+\|H_{3}^{-1}\|\|g_{2}\|_{2})^{2}\\
 & \leq(\|H_{1}^{-1}\|+\|H_{3}^{-1}\|+\|H_{3}^{-1}\|\|H_{2}\|\|H_{1}^{-1}\|)^{2}\|g\|_{2}^{2}
\end{aligned}
\end{equation}
which implies
\begin{equation}\label{h_bound_result}
\|H^{-1}\|\leq\|H_{1}^{-1}\|+\|H_{3}^{-1}\|+\|H_{3}^{-1}\|\|H_{2}\|\|H_{1}^{-1}\|.
\end{equation}

Notice $\|X_{A}\|\lor\|X_{B}\|\leq\|X\|$. Take $X_A$, without loss of generality shuffle columns of $X$ such that
$X=\left(\begin{array}{cc} X_{A} & X_{B}\end{array}\right)$, then for any $g\in \mathbb{R}^{|A|}$ such that $\|g\|_2 = 1$, we have
\[
\|X_{A}g\|_{2}=\|X\left(\begin{array}{c}
g\\
\mathbf{0}
\end{array}\right)\|_{2}\leq\sup\left\{ \|Xv\|_{2}:\;\|v\|_{2}=1\right\} =\|X\|,
\]
implying that $\|X_{A}\|=\sup\left\{ \|X_{A}g\|_{2}:\;\|g\|_{2}=1\right\} \leq\|X\|$. Similarly for $X_B$.

Then a similar argument as in \eqref{bound} shows that
\begin{equation}\label{h2_bound}
\|H_{2}\|\leq1+\alpha+2\|X\|^{2}.
\end{equation}

Combining \eqref{h_bound_result}, \eqref{h2_bound} with results of Lemma \ref{h3_bound} under its condition, and observing that $\|H_{1}^{-1}\|=1$, we obtain the uniform boundedness of $H$ in spectral norm, i.e.,
\begin{eqnarray*}
\|H^{-1}\| &\leq & 1+
\left[\frac{1}{\lambda\alpha}+\left(\frac{1}{\lambda(1-\alpha)}+
\frac{\lambda_{\max}(X^{\top}X)^{2}+n\gamma\lambda(1-\alpha)}
{\lambda(1-\alpha)}\left(1+\frac{\|X\|}
{\sqrt{n}\gamma\lambda(1-\alpha)}\right)^{2}\right) \right.\\
& & \times \left.
\left(1+\frac{2\|X\|}
{\sqrt{n}\gamma\lambda\alpha}\right)\right](2+\alpha+2\|X\|^{2}).
\end{eqnarray*}
\end{proof}

\renewcommand{\refname}{REFERENCES}
\spacingset{0}
\small
\bibliography{huber}
\end{document}